\newtheorem{theorem}{Theorem}[section]
\newtheoremstyle{prestyle}
{0} 
{\topsep} 
{\itshape} 
{} 
{\bfseries} 
{.} 
{.5em} 
{} 
\theoremstyle{prestyle}	
\newtheorem{lemma}[theorem]{Lemma}
\theoremstyle{definition}	
\newtheorem{example}{Example}
\journal{Information Sciences}
\def\BState{\State\hskip-\ALG@thistlm}
\begin{document}

\begin{frontmatter}

\title{Efficient methods to determine the reversibility of general 1D linear cellular automata in polynomial complexity}

\author[a]{Xinyu Du}
\ead{dothingyo@gmail.com}
\author[a]{Chao Wang\corref{cor1}}
\ead{wangchao@nankai.edu.cn}
\author[b]{Tianze Wang}
\ead{blakewang@live.com}
\author[a]{Zeyu Gao}
\ead{gaoshangyouxiang@foxmail.com}

\tnotetext[mytitlenote]{Fully documented templates are available in the elsarticle package on 
\href{http://www.ctan.org/tex-archive/macros/latex/contrib/elsarticle}{CTAN}.}




\address[mymainaddress]{Address: College of Software, Nankai University, Tianjin 300350, China}
\address[secondaddress]{Address: Department of Software and Computer Systems, KTH Royal Institute of Technology, Stockholm 16440, Sweden}

\cortext[cor1]{Corresponding author.}


\begin{abstract}
In this paper, we study reversibility of one-dimensional(1D) linear cellular automata(LCA) under null boundary condition, 
whose core problems have been divided into two main parts: calculating the period of reversibility and 
verifying the reversibility in a period. 
With existing methods, the time and space complexity of these two parts are still too expensive to be employed. 
So the process soon becomes totally incalculable with a slightly big size, which greatly limits its application. 
In this paper, we set out to solve these two problems using two efficient algorithms, 
which make it possible to solve reversible LCA of very large size. 
Furthermore, we provide an interesting perspective to conversely generate 1D LCA from a given period of reversibility. 
Due to our methods' efficiency, we can calculate the reversible LCA with large size, 
which has much potential to enhance security in cryptography system.
\end{abstract}

\begin{keyword}

Cellular automata
\sep Reversibility
\sep Linear rule
\sep Null boundary
\sep Polynomial complexity

\end{keyword}

\end{frontmatter}


\newpage
\section{Introduction \label{Introduction}}
Cellular automata (CA) are discrete dynamical systems and models of massively parallel computation that share many properties of the physical world. 
It's one of the most classic models ever proposed, and it has been widely applied to parallel systems\cite{das2010parallel}, 
secret sharing\cite{maranon2005new}, image encryption\cite{souyah2016fast}, traffic simulation\cite{rickert1996two}, 
thermodynamic simulation\cite{takesue1989ergodic},skin disease diagnosis\cite{kippenberger2013modeling} and many other fields.
Recently, Stephen Wolfram summarized classic CA papers, which has aroused resurgent research interests in CA\cite{wolfram2018cellular}.

As for reversibility, a reversible CA is a CA in which every configuration has a unique predecessor. 
So the previous state of any cell before an update can be determined uniquely from the updated states of all the cells. 
Reversibility of CA has broadened its applications in image security and have a fairly good performance. 
There are a lot of studies about it. \cite{kari1992cryptosystems,wang2013novel,Zhang2002}

Thanks to CA's property of finiteness, its studies mainly fall into 3 categories: periodic boundary, reflective boundary and null boundary. 
Before detailing the null boundary condition in this paper, we want to introduce some good studies of the other two categories as well.

For reflective boundary, LCA of radius 1 has been well studied\cite{2012_H_Akin}. 
For periodic boundary, there are some good studies of the following cases: the hybrid elementary CA(ECA) 90/150\cite{1998_P_Sarkar}, 
the ECA 150\cite{2007_L_Hernandez_Encinas, 2013_A_Martin_del_Rey_b}, the five-neighbor case $\mathbb{F}_p$\cite{2011_Z_Cinkir,2012_I_Siap}, 
the general 1D CA case\cite{2004_A_Nobe}, CA with memory\cite{2012_J_C_Seck-Tuoh-Mora},
$\sigma/\sigma^+$-automata\cite{2000_K_Sutner}, undecidable 2D case\cite{1990_J_Kari} and hexagonal 2D CA\cite{2013_S_Uguz}. 
There are also novel methods like Welch Sets\cite{seck2017welch}, Cayley Trees\cite{chang2017reversibility_1} and graphs\cite{seck2018graphs}.

As for null boundary condition, which is discussed in this paper,
studies fall into one-dimensional and multidimensional cases.
In the one-dimensional case, the following rules have been studied: the hybrid ECA 90/150\cite{1998_P_Sarkar}, 
the ECA 150\cite{2006_A_M_Del_Rey}, the symmetric no-dummy-neighbor linear rules\cite{2009_A_Martin_Del_Rey}, 
the five-neighbor linear rule 11111\cite{2011_A_Martin_del_Rey} and the ECA 150 over finite state set $\mathbb{F}_p$\cite{2013_A_Martin_del_Rey_a}. 
As for multidimensional cases, there are results about the 2D linear rules group 2460 over the ternary field $\mathbb{Z}_3$\cite{2010_I_Siap}, 
hexagonal 2D CA \cite{2011_I_Siap} and 3D linear rules over $\mathbb{Z}_m$ \cite{2012csah}. 
Recently, Chang\cite{chang2017reversibility_2} proposed a criterion for testing the reversibility of a 
multidimensional linear cellular automaton under null boundary condition and an algorithm for the computation of its reverse.

There are two existing methods to deal with the reversibility of 1D LCA over binary field $\mathbb{Z}^{2}$ under null-boundary condition. 
The first is to use transition matrix \cite{2011_A_Martin_del_Rey} and the second is to construct a DFA \cite{yang2015reversibility}. 
We will analyze these two methods and compared our algorithms with them later.

The aim of this paper is to design and optimize algorithms to efficiently calculate reversibility of 1D LCA, which
make reversibility of bigger size calculable and thus greatly extend reversibility of LCA's area to use. 
In practice, one LCA corresponds with one polynomial. 
For any given linear rule $\lambda_{1}...\lambda_{m}$($\lambda_{1}$ = $\lambda_{m} = 1$), 
we can find its corresponding polynomial $f(x) = 1 + \sum_{i=1}^{m-1}\lambda_{m-i}x^{i}$.
For example, 1011011 corresponds to $x^{6}+ x^{4} + x^{3} + x + 1$. 
If we have a rule whose $\lambda_{1} = 0$ or $\lambda_{m} = 0$, 
then for the calculation of period we can simplify it to a smaller rule 
whose $\lambda_{1}$ = $\lambda_{m} = 1$ by eliminating ``0" in both borders and give a corresponding polynomial 
after that. Moreover, we prove that the period of reversibility of LCA equals to the polynomial period in Section \ref{Period}. 
Based on this period, we just need to verify the reversibility within it.

The paper is organized as follows:
\textbf{Section \ref{Mathematical description}} describes the basic mathematical symbols used in this paper. 
\textbf{Section \ref{Existing}} gives a brief introduction of two existing methods. 
\textbf{Section \ref{Period}} details using polynomial method to calculate the CA period. 
\textbf{Section \ref{Reversibility}} optimizes an algorithm to give reversibility in a period. 
\textbf{Section \ref{Generate}} conversely generates LCA from a given period. 
\textbf{Section \ref{Conclusion}} makes a brief summary.


\section{Mathematical description\label{Mathematical description}}
\subsection{Basic definitions}
$p$ is a prime number which represents the number of elements in a finite field. In this paper we always suppose $p=2$.\\
\\
Four-tuple $A=\{d,S,\vec{N},f\}$ is usually used to describe CA.
\begin{itemize}
	\item
	$d\in\mathbb{Z}_{+}$ denotes the dimension of the cellular space.
	\item
	$S = \{0,1,...,p-1\}$ is a finite state set including all states of any cell at any time.
	\item
	$\vec{N}=(\vec{n_1},\vec{n_2},\ldots,\vec{n_m})$ is the neighbor vector, 
     in which $\vec{n_i}\in\mathbb{Z}^d$ and $\vec{n_i}\neq\vec{n_j}$ when $i\neq j$. 
     Therefore, the neighbors of the cell $\vec{n} \in \mathbb{Z}_{d}$ are the $m$ cells $\vec{n}+\vec{n_{i}}$, $i=1,2,\ldots,m$. 
     Here $m$ is called the size of the LCA.
	\item
	$f:S^m\to S$ is a local rule, which maps the current states of all neighbors of a cell to the next state of this cell.
\end{itemize}

A configuration is a mapping $c$: $\mathbb{Z}^{d} \to S$ which assigns each cell a state. 
If we use $c^t$ to denote the configuration(states of all cells) at time $t$, 
then the state of cell $\vec{n}$ at time $t$ is $c^t(\vec{n})$ and state at time $t+1$ goes like
\[
c^{t+1}(\vec{n}) = f( c^t(\vec{n}+\vec{n_1}), c^t(\vec{n}+\vec{n_2}), \ldots, c^t(\vec{n}+\vec{n_m}) ).
\]

For linear cellular automata(LCA), the local rule $f$ should be a linear function as follows:
\begin{equation}
\begin{aligned}
&f( c(\vec{n}+\vec{n_1}), c(\vec{n}+\vec{n_2}), \ldots, c(\vec{n}+\vec{n_m}))\\
=& [\lambda_1c(\vec{n}+\vec{n_1}) + \lambda_2c(\vec{n}+\vec{n_2})+ \cdots +
\lambda_mc(\vec{n}+\vec{n_m})] \ \bmod \ p,\label{eq:first}
\end{aligned}
\end{equation}
where $\lambda_i\in S$ is the rule coefficient of cell $\vec{n}+\vec{n_i}$, $i$ = 1,2,...,$m$.		

For an LCA, the next state of a cell is the linear combination of the rules coefficients and its neighbors' current states modulo $p$. 
Usually, there are infinite number of cells in a $d$-dimensional integer space $\mathbb{Z}^{d}$. 
However, when we try to apply CA to practical problems, the cellular space must be finite. 
So boundary conditions are taken into consideration and null boundary is most often used.
In the condition of a null boundary:
$c(\vec{n}) \equiv 0$ for $\vec{n} \in \mathbb{Z}^{d}$ and $\vec{n}$ not in the cellular space.

In this paper, we discuss 1D null-boundary LCA of $n$ cells over $\mathbb{Z}^{2}$, 
in which $p = 2$, $d$ = 1, $S$ = $\mathbb{Z}_{2}$ = $\{0, 1\}$.

We use $s^{t}_{i}$ to denote the state of cell $i$ at time $t$, in which $i = 1,2,...,n$.
Then we define neighbor vector as ($-r_{L}$,...,0,...,$r_{R}$), 
linear coefficients as $\lambda_{-r_{L}}$,...,$\lambda_{0}$,...,$\lambda_{r_{R}}$.
Then according to \textbf{Eq.(\ref{eq:first})}, the next state of cell $i$ is defined as follows:

\begin{equation}
\begin{aligned}
s_{i}^{t+1} = [\lambda_{-r_{L}}s_{i-r_{L}}^{t} +...+ \lambda_{-1}s_{i-1}^{t} + \lambda_{0}s_{i}^{t} + 
\lambda_{1}s_{i+1}^{t} +...+ \lambda_{r_{R}}s_{i+r_{R}}^{t}]\quad \mod \quad 2,\label{eq:second}
\end{aligned}
\end{equation}
in which $\lambda_{-r_L}=1$ when $r_L>0$ and $\lambda_{r_R}=1$ when $r_R>0$.
If we are given a rule whose $\lambda_{-r_L} = 0$ or $\lambda_{r_R} = 0$, 
we delete the border elements until it satisfies $\lambda_{-r_L}=1$ and $\lambda_{r_R}=1$.
Moreover, under null boundary condition, $s_i^t=0$ for $i\notin\{1,2,\ldots,n\}$.

From \textbf{Eq.(\ref{eq:second})}, we know that a linear rule can be denoted by LCA's linear coefficients and its neighbor vector.
We use $\lambda_{-r_L}...\lambda_{0}...\lambda_{r_R}$ and $(-r_L,\ldots,0,\ldots,r_R)$ to represent linear coefficients and neighbor vector.

As examples, in \cite{2011_A_Martin_del_Rey}, the linear rule 11111 with neighbor vector(-2, -1, 0, 1, 2) were discussed.
The rule $\lambda_{i}=1$, $(i=-k,\dots,k)$ with neighbor vector $(-k,\ldots,-1,0,1,\ldots,k)$ were discussed in \cite{2009_A_Martin_Del_Rey}.

Moreover, the condition of unilateral rules$(r_L=0 \mbox{ or } r_R=0)$ has been discussed \cite{yang2015reversibility} 
and they got the following conclusion:
a one-dimensional LCA with a unilateral rule under null boundary conditions is always irreversible if $\lambda_{0} = 0$ 
and always reversible if $\lambda_{0} = 1$.
So in this paper, we just discuss the bilateral condition $r_L>0$ and $r_R>0$.

\section{Two existing solutions\label{Existing}}

\subsection{\textbf{Solution1: Transition matrix}}
In 2011, Mart\'in del Rey and Rodr\'iguez S\'anchez explained the reversibility of LCA with analysis of transition matrix 
and achieved great performance for case 11111\cite{2011_A_Martin_del_Rey}. 
In this case, for a transition matrix $M_{n}$ over $\mathbb{Z}_{2}$, the determinant $|M_{n}|$ can only be 0 or 1. 
When $|M_{n}|=1$, it is invertible, vice versa. 
They analyzed the determinant of the transition matrix and demonstrated that the determinant 
of the $\{n+5\}^{th}$ matrix identically equals to the determinant of ${n}^{th}$ matrix. 
Furthermore, they proved that when $n=5k$ or $5k+1$, it is invertible, in which $n$ is the cell number of the LCA.

\subsubsection{Brief description}
Here is the simple process of the linear rule 11111 using the first solution. For further details, please refer to \cite{2011_A_Martin_del_Rey}.

\begin{enumerate}
\item Configuration transition:
$ M_{n} * (c^{t})^{T} = (c^{t+1})^{T}$,
in which $M_{n}$ denotes the transition matrix, $c^{t}$ and $c^{t+1}$ denote the configurations at $t$ and $t+1$.

\item Transition matrix of case 11111:
$$
M_{n}={
	\left[ \begin{array}{cccccccc}
	1 & 1 & 1 & 0 & \cdots & 0 & 0 & 0\\
	1 & 1 & 1 & 1 & \cdots & 0 & 0 & 0\\
	1 & 1 & 1 & 1 & \cdots & 0 & 0 & 0\\
	0 & 1 & 1 & 1 & \cdots & 0 & 0 & 0\\
	\vdots & \vdots & \vdots & \vdots & \ddots & \vdots & \vdots & \vdots\\
	0 & 0 & 0 & 0 & \cdots & 1 & 1 & 1 \\
	0 & 0 & 0 & 0 & \cdots & 1 & 1 & 1 \\
	0 & 0 & 0 & 0 & \cdots & 1 & 1 & 1 \\
	\end{array}
	\right ]}.
$$

\item Conclusion:


The matrix can be written as follows:
$$
M_{n}={
	\left[ \begin{array}{cc}
	M_{5} & A \\
	A^{T} & M_{n-5} \\
	\end{array}
	\right ]}
$$

and then it can be proved 
$|M_{n}| = |M_{n-5}|$ $\quad \mod \quad 2$\\

Furthermore, the determinant of $M_{n}$ satisfies:

$$
|M_{n}| \quad \mod \quad 2 =
\begin{cases}
1,\quad if \quad n = 5k \quad or \quad n = 5k + 1,\quad with \quad k \ge 0,\\
0,\quad otherwise.	
\end{cases}
$$\\

To compute the inverse matrix $M_{n}^{-1}$, they use the following decomposition:

If $n$ = $5k$, then
$$
{M_{n}}^{-1}={
	\left[ \begin{array}{cccc}
	M_{5}^{-1} & B & \cdots & B \\
	B^{T} & M_{5}^{-1} & \ddots & \vdots\\
	\vdots & \ddots & \ddots & B \\
	B^{T} & \cdots & B^{T} & M_{5}^{-1}\\
	\end{array}
	\right ]}
$$

If $n$ = $5k+1$, then
$$
{M_{n}}^{-1}={
	\left[ \begin{array}{cccc}
	M_{6}^{-1} & Q & \cdots & Q \\
	Q^{T} & R & \ddots & \vdots\\
	\vdots & \ddots & \ddots & Q \\
	Q^{T} & \cdots & Q^{T} & R\\
	\end{array}
	\right ]}
$$

$B,R,Q$ are three constant matrixes.

\end{enumerate}

\subsubsection{Basic performance analysis}

There are two choices using this method to find period of reversibility: the first is to calculate determinant,
find and prove period manually, which has great performance for specific case but is difficult to generalize. 
The second is to find period with determinant value sequence of 0 and 1, which saves human labor, 
but expensive in calculation cost and difficult to prove periodicity in mathematics.

For generalization, we use the second choice to calculate period through transition matrix, 
which also needs to cost $O(n^3)$ for increasing cell number $n$ by row reducing the transition matrix to an echelon form 
to calculate determinant. Furthermore, for verifying a period, we at least cost $O(\sum_{n = 1}^{period} n^{3}) \approx O(period^{4})$. In the worst case, $period = 2^{r_{R}+r_{L}} - 1$ and thus cost = $O({2^{4(r_{R}+r_{L})}})$,
which is pretty expensive.

\subsection{\textbf{Solution2: DFA}}
\label{DFA}
Inspired by the former result\cite{2011_A_Martin_del_Rey}, Bin Yang and Chao Wang proposed DFA to solve the problem 
of the reversibility of general 1D LCA over the binary field $\mathbb Z_{2}$ under null boundary condition\cite{yang2015reversibility}.
First they constructed a DFA with all $2^{r_{R}}$ tuples in a node.
Then they got the period by judging if nodes repeated. Finally they verified reversibility by judging if postfixes repeat in a node.

\begin{figure}[!htbp]
	\small
	\centering
	\includegraphics[width=18cm]{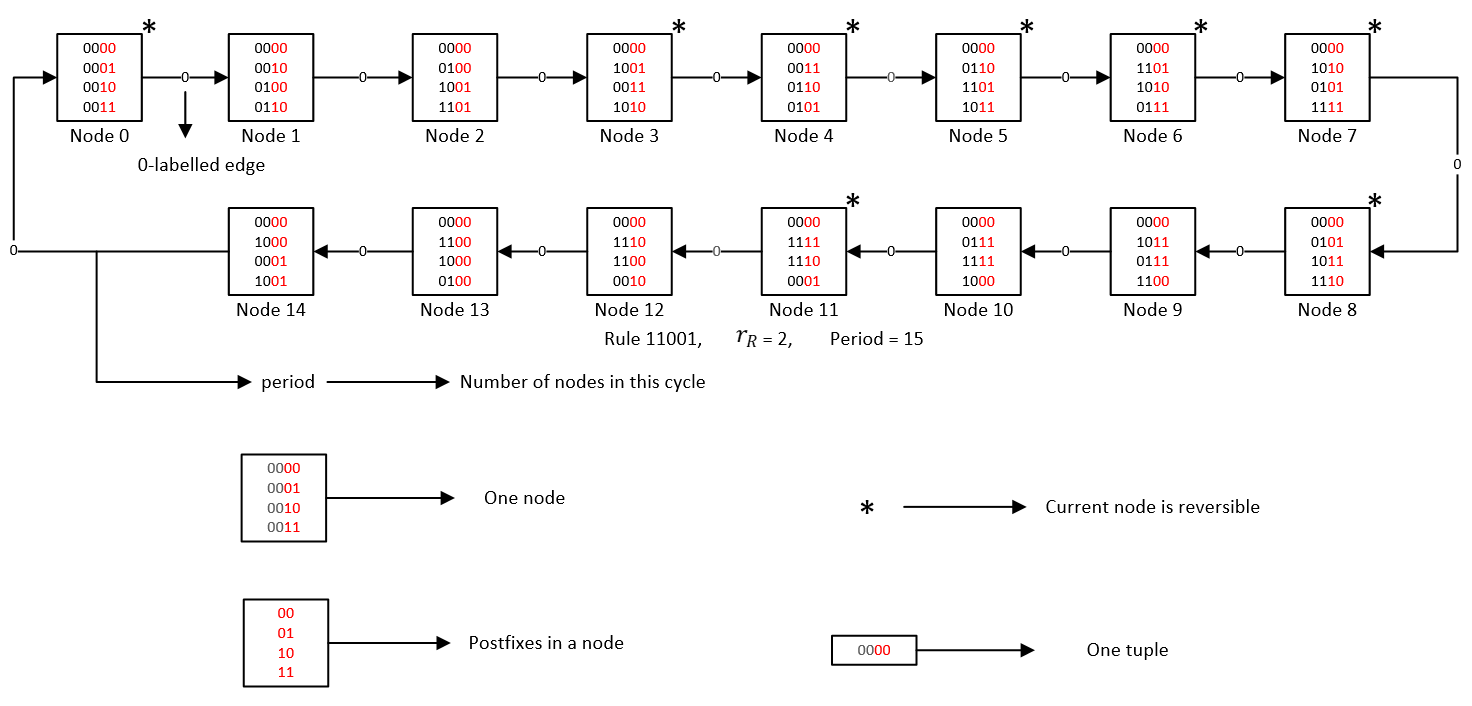}
	\caption{The DFA of linear rule 11001 with the neighbor vector (-2, -1, 0, 1, 2)}\label{flow1}
\end{figure}

\textbf{Fig.\ref{flow1}} shows the former DFA with tuples in each node. In these DFA figures, 
the tuple $(0,0,0,1)$ is simplified to $0001$. In the theoretical demonstration part, we use $(0,0,0,1)$ to denote the tuple.

\newpage

\subsubsection{Basic definitions of DFA}
we use the following terms to describe DFA. For further details, please refer to \cite{yang2015reversibility}.
\begin{itemize}
	\item
	node: $2^{r_{R}}$ tuples contained in one rectangle in the DFA figure.
	\item
	tuple: one row of a node, length of a tuple equals to $r_{L}+r_{R}$.
	\item
	period: the number of nodes in a DFA figure.
	\item
	prefix: the first $r_{L}$ elements of a tuple.
	\item
	postfix: the last $r_{R}$ elements of a tuple,\\
	$postfix_{i}$: the $i^{th}$ postfix in a node and
	$postfix_{i} = (e_{1},e_{2},...,e_{i},...,e_{r_{R}})$.
	\item
	symbol *: the current node is reversible (node $i$ is reversible means: LCA with cell number $n = i + k * period $ is reversible, where $k$ is a non-negative integer).
\end{itemize}

\subsubsection{Brief description}
\label{Brief_Description_of_DFA}
Here is the simple process with DFA:

\begin{enumerate}
\label{thm_inheri}
\item Constructing DFA: \\
\begin{description}
	\item[Step1.] First construct the initial node: Set all prefixes as $"0...0"$(all-zeros) of length $r_{L}$, all postfixes as $2^{r_{R}}$ "0-1" permutation of length $r_{R}$. 

	\item[Step2.] Construct 0-labelled edge to the next node.\\	
	\begin{figure}[!htbp]
		\small
		\centering
		\includegraphics[width=18cm]{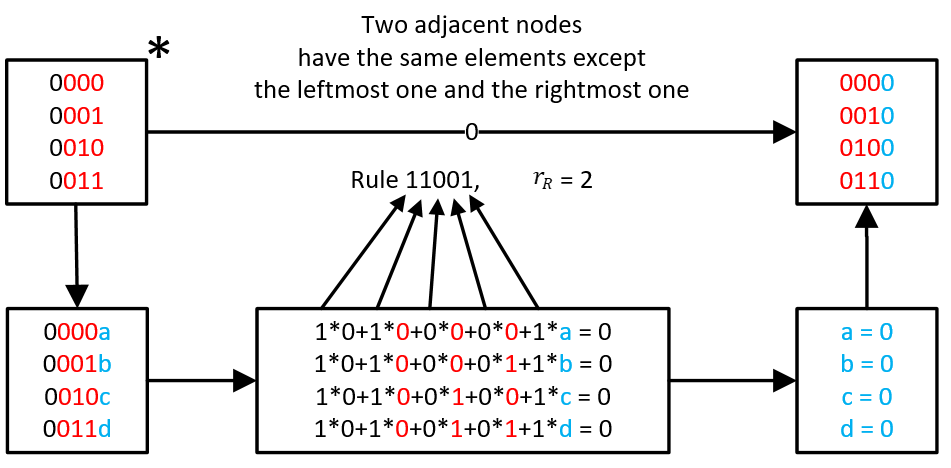}
		\caption{Constructing 0-labelled edge to get the next node}\label{flow_additional}
	\end{figure}
		\label{labelled_edge}\\
		0-labelled edge is the edge between two nodes. With the given "0", we can calculate the next node as follows. For a 0-labelled edge in \textbf{Fig. \ref{flow_additional}}, we need to calculate the next node by solving \textbf{Eq.(\ref{eq:second})}, in which the number "0" on the edge means $s_{i}^{t+1} = 0$, for any given tuple:\\
		\begin{center} 
		$s_{i-r_{L}}^{t} = tuple[0]$\\
		...\\
		$s_{i-1}^{t} = tuple[r_{L}-1]$\\
		$s_{i}^{t} = tuple[r_{L}]$\\
		$s_{i+1}^{t} = tuple[r_{L}+1]$\\
		...\\
		$s_{i+r_{R}-1}^{t} = tuple[r_{L}+r_{R}]$,\\
		\end{center}
		and $s_{i+r_{R}}^{t}$ is the element we want to get, which is also the rightmost element in the next node. Two adjacent tuples have the same elements except the leftmost one and the rightmost one. So after getting the rightmost element of each tuple in the next node, we get the whole next node.
		For example, with linear rule 11001 and neighbor vector (-2, -1, 0, 1, 2), the process can be described by \textbf{Fig.\ref{flow_additional}}.

	\item[Step3.] Repeat \textbf{Step2} until the next node is the same as the initial node to form a 0-edged circle. 

	\item[Step4.] Get the period from the number of nodes in a circle.

	\item[Step5.] Pick up reversible nodes $n_{1},n_{2},\ldots$ whose postfixes do not repeat
and mark these nodes with "*".

	\item[Step6.] Finally, for nodes in the circle with period $p$, the LCA is reversible if and only if $n=n_1,n_2,\ldots \mod p$, where $n$ is the number of cells.

\end{description}

\item Important theorems:\\
	(1) Two adjacent nodes have the same elements except the leftmost element of the first node and the rightmost element of the second node.
		
	(2) The DFA graph is periodic and each node contains $2^{r_R}$ tuples of length $r_L+r_R$. The last edge will lead back to the initial node.

\item Conclusion:\\
	(1) If a 0-edged circle is formed, then the number of nodes in this circle equals to the period.
	
	(2) One node is reversible if and only if all postfixes in it perform a complete permutation of $r_{R}$ binary elements,
which equals to $2^{r_R}$ kinds of all different 0-1 postfixes of length $r_R$ .	
\end{enumerate}

\subsubsection{Brief performance analysis}

The DFA method is applicable to some more general case, but it still needs $O(period*2^{r_R}*r_R)$ space complexity
to store all nodes in a period and $O(period*2^{r_R}*r_R)$ time complexity to judge if they repeat,
both of which are at least exponential complexity. In this way, with $r_{R}$ increasing,
both the period and reversibility become incalculable.

\section{Polynomial method to calculate the period of the reversibility of LCA}
\label{Period}
In this section we discuss one of the core problems: calculating the period of the reversibility of 1D LCA of $N$ cells.

\begin{figure}[!htbp]
	\small
	\centering
	\includegraphics[width=18cm]{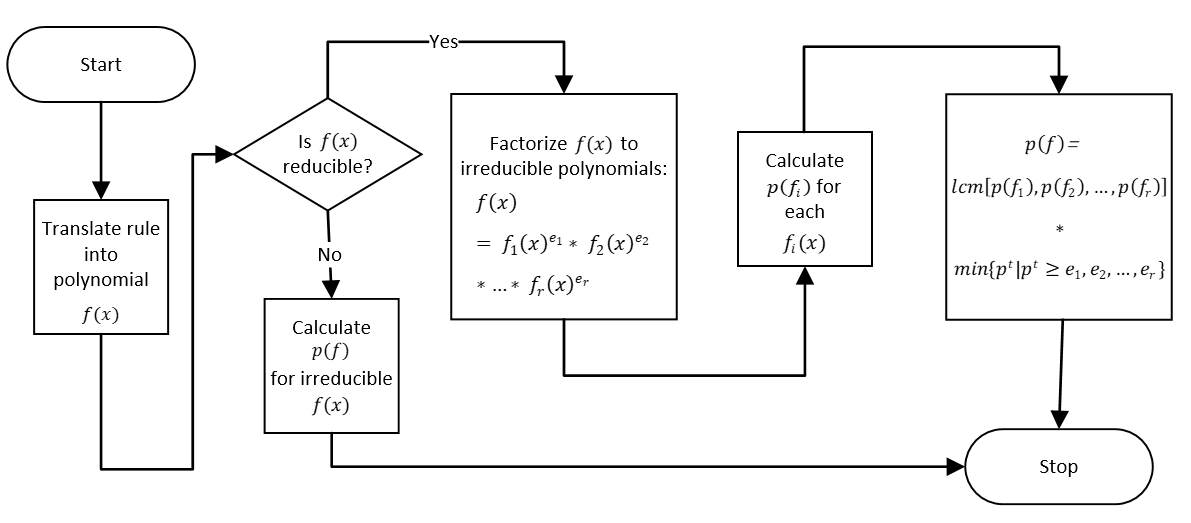}
	\caption{Translate linear rule to polynomial to calculate the period of the reversibility of LCA }
    \label{flow_polynomial}
\end{figure}

\textbf{Fig.\ref{flow_polynomial}} shows the process of calculating the period of the reversibility of LCA by polynomial method.

\begin{theorem}
\label{thm_equal}
The period of the reversibility of LCA is equal to the period of its corresponding DFA.
It is also equal to the period of its corresponding linear feedback shift register(LFSR)
and the period of its corresponding polynomial as well.
\end{theorem}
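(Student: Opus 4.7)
The plan is to establish the four claimed equalities one link at a time. The identity ``period of reversibility $=$ period of DFA'' is already built into the construction of Section~\ref{DFA}: Step~6 reads the reversibility of the $n$-cell LCA off the $n$-th DFA node, so the reversibility pattern inherits the cycle length of the 0-edge circle. The identity ``period of LFSR $=$ period of polynomial'' is the classical fact that the multiplicative order of the companion matrix of $f(x)$ in $GL_d(\mathbb{F}_2)$ equals the order of $x$ in $\mathbb{F}_2[x]/(f)$, i.e., the smallest $T$ with $f\mid x^T+1$. So the substantive content is the middle link: DFA period $=$ polynomial period $T^{\star}$.

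The approach to this middle link is to recognise each 0-labelled DFA edge as a single LFSR step. Imposing $s_i^{t+1}=0$ in Eq.~(\ref{eq:second}) and using $\lambda_{r_R}=1$ forces
\[
s_{i+r_R} \;=\; \sum_{j=-r_L}^{r_R-1} \lambda_j\, s_{i+j} \pmod{2},
\]
so the tuple $(s_{i-r_L},\ldots,s_{i+r_R-1})\in\mathbb{F}_2^{d}$ (with $d=r_L+r_R$) is sent to its left shift $(s_{i-r_L+1},\ldots,s_{i+r_R})$. This is exactly the length-$d$ LFSR shift whose characteristic polynomial is the reciprocal $\tilde f(x):=x^{d}f(1/x)$. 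Since $f(0)=1$, $f$ and $\tilde f$ share the same polynomial period $T^{\star}$, so the LFSR shift $\sigma$ on $\mathbb{F}_2^{d}$ has multiplicative order exactly $T^{\star}$.

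Now the DFA period is the smallest $T$ such that applying $\sigma^{T}$ sends every tuple in the initial node $S=\{(0^{r_L},p):p\in\mathbb{F}_2^{r_R}\}$ back to itself; equivalently, the least common multiple of the $\sigma$-orbit lengths of the elements of $S$. The upper bound DFA period $\le T^{\star}$ is immediate from $\sigma^{T^{\star}}=I$. For the matching lower bound I exhibit a single element of $S$ whose $\sigma$-orbit has length $T^{\star}$: the standard basis vector $e_{d-1}=(0,\ldots,0,1)$ lies in $S$ because $d-1\ge r_L$, and a short upper-triangular computation shows $\{e_{d-1},\sigma e_{d-1},\ldots,\sigma^{d-1}e_{d-1}\}$ is a basis of $\mathbb{F}_2^{d}$. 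Hence $e_{d-1}$ is a cyclic vector for $\sigma$, its $\mathbb{F}_2[x]$-annihilator equals $(\tilde f)$, and so its $\sigma$-orbit has length precisely $T^{\star}$. Combining the bounds gives DFA period $=T^{\star}$, closing the chain.

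The only non-routine step I anticipate is the lower bound, which rests on identifying $e_{d-1}\in S$ as a cyclic vector for the LFSR shift; everything else is either a direct consequence of Section~\ref{DFA} or a textbook fact about companion matrices and polynomial periods. A minor bookkeeping point will be translating the paper's positive-indexed coefficients $\lambda_1,\ldots,\lambda_m$ into the symmetric indexing $\lambda_{-r_L},\ldots,\lambda_{r_R}$ when verifying that $\tilde f$ is indeed the characteristic polynomial of $\sigma$; this is purely mechanical.
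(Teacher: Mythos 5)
Your proof is correct in substance but takes a genuinely different route from the paper: the paper's entire proof is a citation --- it invokes \cite{yang2015reversibility} for ``reversibility period $=$ DFA period $=$ LFSR period'' and dismisses ``LFSR period $=$ polynomial period'' as a known fact from coding theory --- whereas you actually prove the nontrivial middle link. Your identification of each $0$-labelled edge with one step of the length-$(r_L+r_R)$ LFSR whose characteristic polynomial is the reciprocal $\tilde f$, plus the observation that $e_{d-1}$ lies in the initial node and is a cyclic vector for the shift $\sigma$ (so its orbit length is exactly the polynomial period $T^{\star}$, while $\sigma^{T^{\star}}=I$ gives the matching upper bound), is a clean self-contained argument; it is also, in embryo, exactly the linear-algebra structure the paper only develops later in Section~\ref{Reversibility} (standard-basis postfixes spanning the node and being tracked linearly through $0$-edges, Theorems~\ref{thm6} and~\ref{bijection}). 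The citation buys the paper brevity; your argument buys an actual proof and a preview of the SBP machinery. Two points to tighten: (i) characterizing the DFA period as the lcm of the $\sigma$-orbit lengths of the initial node's tuples presupposes that nodes are compared as \emph{indexed} families (tuple $i$ must return to tuple $i$), not merely as sets; this matches the paper's semantics, but under set comparison you would additionally need to rule out $\sigma^{T}$ permuting the initial subspace nontrivially. (ii) The link ``reversibility period $=$ DFA period'' is not quite ``built into'' Step~6: the construction only shows the DFA period is \emph{a} period of the reversibility sequence, and the claim that no strictly smaller period exists is the part genuinely owed to \cite{yang2015reversibility}, on which the paper leans exactly as you would have to.
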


\begin{proof}
In \cite{yang2015reversibility}, it has been proved that the period of the reversibility of LCA is equal to
the period of its corresponding DFA and the period of its corresponding LFSR.
Furthermore, it is widely known that the period of the LFSR is equal to the period of its
corresponding polynomial from the coding theory.
\end{proof}

\subsection{Calculate the period of irreducible polynomials}
\label{cal_irre}

we assume $f(x)$ is an irreducible polynomial of degree $n$ on a general finite field $GF(q)$, where $q$ is the power of a prime.
Here we demand $f(0) \neq 0$ .

1. Factorize $q^n-1$ into the product of powers of different prime numbers, i.e.,
$q^{n}-1 = p_{1}^{e_{1}} p_{2}^{e_{2}}...p_{r}^{e_{r}}$. Suppose there are $r$ different primes.

2. For $i$ = 1,2,...,$r$, calculate
$(x^{(q^{n}-1)/p_{i}})_{f(x)}, (x^{(q^{n}-1)/p_{i}^{2}})_{f(x)}, (x^{(q^{n}-1)/p_{i}^{3}})_{f(x)}$,...
until we get a nonnegative $f_{i} \leq e_{i}$, such that
$(x^{(q^{n}-1)/p_{i}^{(f_{i})}})_{f(x)} = 1$,
$(x^{(q^{n}-1)/p_{i}^{(f_{i}+1)}})_{f(x)} \neq 1$(the subscript $f(x)$ means mod $f(x)$ in finite field). Thus,
$p(f)|(q^{n} - 1)/p_{i}^{f_{i}}$, $p(f)\nmid(q^{n} - 1)/p_{i}^{f_{i}+1}$.
Then, $p(f)$ = $p^{e_{1}-f_{1}}_{1}$$p^{e_{2}-f_{2}}_{2}$...$p^{e_{r}-f_{r}}_{r}$

As for the details of calculating $(x^{q_{n}-1/p_{i}^{s}})_{f(x)}$, $s=1,2,...$, $i=1,2,\ldots,r$,
please refer to page 150 of ``Algebraic Coding Theory"\cite{berlekamp2015algebraic}.

\subsection{Calculate the period of reducible polynomials for general case}
1. Factorization. There are several good factorization methods we need to mention:
Berlekamp's algorithm \cite{berlekamp1967factoring},
Cantor¨CZassenhaus algorithm \cite{cantor1981new},
Victor Shoup's algorithm \cite{shoup1990new},
we can also see the comparison among these algorithms' performances \cite{von2001factoring}.

In this paper, we use Berlekamp's algorithm to factorize $f(x)$ into the product of irreducible polynomials $f_i(x)$.
This algorithm has a polynomial complexity in fixed finite field and thus is efficient enough over ${\Bbb Z}_2$.

2. Calculate the period of each irreducible polynomial $f_{i}(x)$ by \textbf{Subsection \ref{thm_equal}}.

3. Calculate the period of $f(x)$ from \textbf{Theorem \ref{thm_fac}} after all periods of irreducible polynomials have been obtained.
\begin{theorem}
\label{thm_fac}
Let $f(x)$ be a reducible polynomial and $f(0)\neq 0$, Assume
$f(x)=f_1(x)^{e_1}f_2(x)^{e_2}\cdots f_r(x)^{e_r}$, where
$f_1(x)$, $f_2(x)$, $\ldots$, $f_r(x)$ are $r$ different irreducible polynomials over ${\Bbb Z}_2$
and $e_1$, $e_2$, $\ldots$, $e_r$ are $r$ positive integers, then we get:
\begin{equation}
p(f)=lcm\left[ p(f_1),p(f_2),...,p(f_r)\right] \min \left\{2^t| 2^t \geq e_1,e_2,\ldots,e_r \right\},
\end{equation}
where lcm means the least common multiple.
\end{theorem}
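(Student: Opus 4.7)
The plan is to reduce Theorem \ref{thm_fac} to two classical facts about polynomial periods over $\mathbb{Z}_2$ and then combine them. Recall that $p(f)$ is defined as the smallest positive integer $N$ with $f(x)\mid x^N-1$ (well-defined because $f(0)\neq 0$). The two ingredients are: (a) if $g_1,\ldots,g_r$ are pairwise coprime with nonzero constant term, then $p(g_1\cdots g_r)=\mathrm{lcm}(p(g_1),\ldots,p(g_r))$; and (b) for an irreducible $g$ with $g(0)\neq 0$, the period of $g^e$ is $p(g)\cdot 2^t$, where $2^t$ is the smallest power of $2$ with $2^t\geq e$.

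Fact (a) I would dispatch first: since the $f_i$ are distinct irreducibles, the factors $f_i^{e_i}$ are pairwise coprime, so $f\mid x^N-1$ is equivalent to $f_i^{e_i}\mid x^N-1$ for every $i$. The smallest common $N$ is the lcm of the individual periods, giving $p(f)=\mathrm{lcm}(p(f_1^{e_1}),\ldots,p(f_r^{e_r}))$.

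For fact (b), the upper bound comes from the characteristic-$2$ Frobenius: writing $x^{p(g)}-1=g(x)h(x)$ and squaring $t$ times yields $x^{p(g)\cdot 2^t}-1=g(x)^{2^t}h(x)^{2^t}$, so $g^e\mid x^{p(g)\cdot 2^t}-1$ whenever $2^t\geq e$. For the matching lower bound I would pass to $\overline{\mathbb{Z}_2}$. A root $\alpha$ of $g$ is a primitive $p(g)$-th root of unity. For any $N$, write $N=m\cdot 2^s$ with $m$ odd; then $x^N-1=(x^m-1)^{2^s}$ and each root occurs with multiplicity exactly $2^s$. Thus $g^e\mid x^N-1$ forces simultaneously $p(g)\mid m$ and $2^s\geq e$, and the minimal such $N$ is $p(g)\cdot 2^t$. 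This step is the main obstacle, since it is precisely where characteristic $2$ inflates the period by a power of $2$, in contrast to the characteristic-zero situation.

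To finish, I would combine (a) and (b). Each $p(f_i)$ divides $2^{\deg f_i}-1$ and hence is odd, while the $2^{t_i}$ are pure powers of $2$, so the lcm factors cleanly:
\begin{equation*}
\mathrm{lcm}\bigl(p(f_1)\cdot 2^{t_1},\ldots,p(f_r)\cdot 2^{t_r}\bigr)
=\mathrm{lcm}\bigl(p(f_1),\ldots,p(f_r)\bigr)\cdot \max_{1\leq i\leq r}2^{t_i}.
\end{equation*}
Since $\max_i 2^{t_i}$ is by construction the smallest power of $2$ at least $\max_i e_i$, it coincides with $\min\{2^t\mid 2^t\geq e_1,e_2,\ldots,e_r\}$, yielding exactly the formula in the theorem statement.
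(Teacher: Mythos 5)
Your argument is correct, but note that the paper does not actually prove Theorem \ref{thm_fac} at all: its ``proof'' is a one-line citation to page 150 of Berlekamp's \emph{Algebraic Coding Theory}. What you have written is a complete, self-contained reconstruction of the standard textbook argument, and each step checks out. Your fact (a) is sound because the set $\{N : g \mid x^N-1\}$ is closed under gcd (via $\gcd(x^{N_1}-1,x^{N_2}-1)=x^{\gcd(N_1,N_2)}-1$) and hence consists exactly of the multiples of $p(g)$, so the pairwise coprimality of the $f_i^{e_i}$ gives the lcm reduction. Your fact (b) is the crux and you handle both directions properly: the Frobenius identity $x^{2N}-1=(x^N-1)^2$ in characteristic $2$ gives the upper bound, and the decomposition $x^N-1=(x^m-1)^{2^s}$ with $x^m-1$ separable for odd $m$ pins every root's multiplicity at exactly $2^s$, forcing both $p(g)\mid m$ and $2^s\geq e$ for the lower bound. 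The final splitting of the lcm uses that each $p(f_i)$ divides $2^{\deg f_i}-1$ and is therefore odd, which is exactly the observation the paper itself records later as Lemma \ref{lemma_odd}. Relative to the paper your proof is a genuine addition (it supplies the missing argument rather than deferring to a reference); relative to the cited source it is essentially the same classical proof, so there is nothing to flag beyond the minor point that the identification of $\max_i 2^{t_i}$ with $\min\{2^t \mid 2^t\geq e_1,\ldots,e_r\}$ deserves the one-line justification you give, namely that the smallest power of $2$ dominating $\max_i e_i$ dominates every $2^{t_i}$.
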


\begin{proof}
Please refer to page 150 of ``Algebraic Coding Theory"\cite{berlekamp2015algebraic}.
\end{proof}

\subsection{Comparison with former algorithms}
\begin{table}
	\caption{Comparison among three methods to calculate the period of the reversibility of LCA with $N$ cells}
	\hskip-0.0cm\
	\begin{tabular}{| p{2cm}| p{5cm}| p{2.2cm}| p{2.2cm}| p{2.2cm}|}
		
		\hline
		$r_L+r_R+1$ &   Linear Rule  & TMS \cite{2011_A_Martin_del_Rey} &  DFA \cite{yang2015reversibility} & PP[ours]\\
		\hline
		5 & 10011 & 0.032s & 0.004s & 0.0005s \\
		7  & 1000011 & 1.049s & 0.004s  & 0.0005s                          \\
		9  & 101100011 & 70.060s  & 0.004s  & 0.0005s                      \\
		11 & 10000001001  & 5187.539s  & 0.004s  & 0.0005s                  \\
		13 & 1000010011001  & Timeout  & 0.030s  & 0.0070s                 \\
		15 & 101100000000011  & Timeout  & 0.194s  & 0.0080s               \\
		17 & 10000000000101101  & Timeout  & 1.575s  & 0.0210s             \\
		19 & 1000000000010000001  & Timeout  & 16.422s  & 0.0660s          \\
		21 & 100000000000000001001   & Timeout & 153.955s  & 0.2570s       \\
		23 & 10000000000000000000011  & Timeout  & 1418.936s  & 0.6200s    \\
		25 & 1000000000000000000011011  & Timeout  & 11711.469s  & 5.1670s \\
		27 & 100000000000000000110000011  & Timeout  & Timeout  & 9.1210s \\
		\hline
		\multicolumn{2}{|c|}{Time Complexity(worst case)} & $O(2^{4(r_{L}+r_{R})})$ & $O(2^{r_{L}+2{r_{R}} - 1})$ & $O((r_{L}+r_{R})^{k})$\\
		\hline
	\end{tabular}
	\label{tab1}
\end{table}

Compared with the former result of \cite{2011_A_Martin_del_Rey} and \cite{yang2015reversibility},
we have greatly saved the time to calculate the period of the reversibility of LCA.
Our experiment by a laptop is showing in Table \ref{tab1}.
TMS is the method using transition matrix sequence\cite{2011_A_Martin_del_Rey}.
DFS is the method using DFA and de Bruijn graph\cite{yang2015reversibility}.
PP is our proposed method which calculates the period of the polynomial instead. This method successfully reduced the time complexity to $O((r_{L}+r_{R})^{k})$ where $k$ is a constant.
If the time is more than 4 hours, which is too long for us to wait, we label it as "timeout".

\section{Standard-basis-postfix(SBP) algorithm for verifying the reversibility of LCA}
\label{Reversibility}
The basic definitions of DFA has been proposed in \textbf{Section \ref{DFA}}. In this section we will propose an efficient algorithm, which is called "standard-basis-postfix algorithm",
to solve the other core problem - verifying reversibility in a period.

\subsection{Basic Description}

\begin{theorem}
	\label{extra}
	If there are two or more tuples in a node which have the same postfix, the node is irreversible, or else it is reversible.
\end{theorem}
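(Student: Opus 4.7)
The plan is to reduce the statement to the reversibility criterion already recorded at the end of Section \ref{DFA}: a node is reversible if and only if its postfixes form a complete permutation of the binary strings of length $r_R$, i.e.\ all $2^{r_R}$ possible postfixes appear in the node. Once this characterization is in hand, the theorem becomes a cardinality/pigeonhole observation.

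First I would note the two cardinalities that drive the argument. By construction of the DFA (Step 1 in Section \ref{Brief_Description_of_DFA}, together with the structural theorem that each subsequent node also has $2^{r_R}$ tuples of length $r_L+r_R$), every node contains exactly $2^{r_R}$ tuples. On the other hand, a postfix is a binary string of length $r_R$, so the set of all possible postfixes also has cardinality $2^{r_R}$. Thus each node assigns $2^{r_R}$ postfixes to a codomain of size $2^{r_R}$, and the map tuple $\mapsto$ postfix is a complete permutation precisely when it is injective.

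Next I would spell out both directions. For the forward direction, suppose two distinct tuples in a node share a postfix. Then the multiset of $2^{r_R}$ postfixes contains a repetition, so at most $2^{r_R}-1$ distinct values occur and some length-$r_R$ postfix is absent. This violates the completeness condition in the DFA criterion, hence the node is irreversible. For the converse, if all postfixes in the node are pairwise distinct, then the $2^{r_R}$ postfixes realise every element of $\{0,1\}^{r_R}$ exactly once, which is the complete permutation condition, so the node is reversible.

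There is no real obstacle beyond correctly invoking the pre-existing DFA criterion; the proof is essentially a pigeonhole argument. The only subtlety worth double-checking is that the number of tuples per node is genuinely invariant across the DFA (so that the counts $2^{r_R}$ on each side really do match), but this is guaranteed by the structural theorems already recalled in Section \ref{Brief_Description_of_DFA}.
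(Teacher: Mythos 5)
Your proposal is correct and follows essentially the same route as the paper: both reduce the statement to the previously cited DFA criterion that a node is reversible if and only if its postfixes form a complete permutation of all $2^{r_R}$ binary strings of length $r_R$. You merely make explicit the pigeonhole/counting step (equal cardinalities of tuples and possible postfixes, so completeness is equivalent to injectivity) that the paper leaves implicit.
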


\begin{proof}
	We need to mention one conclusion of the former DFA result\cite{yang2015reversibility},
	which is also mentioned as (2) of conclusion in Subsection \ref{Brief_Description_of_DFA}.
	It is:
	one node is reversible if and only if all tuples in it perform a complete permutation of $r_{R}$ binary elements.
	
	We can change this conclusion which has been proved in \cite{yang2015reversibility} to the theorem we need to prove: If there are two or more tuples in a node which have the same postfix, the node is irreversible, or else the node is reversible.
	
	So the theorem is proved.
\end{proof}

\begin{theorem}
	$postfix_{0} \equiv (0,...,0,...,0)$
\end{theorem}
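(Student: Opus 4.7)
The plan is to prove the claim by induction on the node index of the DFA, exploiting the observation that the all-zeros tuple is a fixed point of the $0$-labelled edge transition. The statement is really a straightforward consequence of the way the initial node is built together with the linearity of the local rule, so the whole argument is short once the indexing is pinned down.

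For the base case I would invoke Step 1 of the DFA construction in Subsection \ref{Brief_Description_of_DFA}: every prefix in the initial node is fixed as $(0,\ldots,0)$ of length $r_L$, and the $2^{r_R}$ postfixes enumerate all binary strings of length $r_R$ in a canonical order placing $(0,\ldots,0)$ first. Hence the $0^{\text{th}}$ tuple of the initial node is the all-zeros string of length $r_L+r_R$, and in particular $postfix_0=(0,\ldots,0)$ there.

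For the inductive step, assume $postfix_0=(0,\ldots,0)$ in some node and, more strongly, that the full $0^{\text{th}}$ tuple is the zero vector, i.e.\ $s_{i-r_L}^{t}=s_{i-r_L+1}^{t}=\cdots=s_{i+r_R-1}^{t}=0$. Following Step 2 of the DFA construction, the $0^{\text{th}}$ tuple of the next node is obtained by taking a $0$-labelled edge, which amounts to solving \textbf{Eq.~(\ref{eq:second})} for the new rightmost element $s_{i+r_R}^{t}$ under $s_i^{t+1}=0$. Since $\lambda_{r_R}=1$, this gives
\[
s_{i+r_R}^{t}\;=\;s_i^{t+1}\;+\;\sum_{k=-r_L}^{r_R-1}\lambda_k\,s_{i+k}^{t}\;=\;0\pmod{2}.
\]
The new $0^{\text{th}}$ tuple is therefore the left shift of the old one with a $0$ appended, which is still all zeros, so its postfix is $(0,\ldots,0)$ as required.

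I expect the only mildly subtle point to be making sure that tuple indexing is preserved across adjacent nodes — that the transition sends the $i^{\text{th}}$ tuple of one node to the $i^{\text{th}}$ tuple of the next rather than permuting the tuples. This is exactly point (1) of the ``Important theorems'' in Subsection \ref{Brief_Description_of_DFA} (``two adjacent tuples share all elements except the leftmost and the rightmost''), so once that is cited the induction closes cleanly and there is no real obstacle.
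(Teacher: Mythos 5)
Your proposal is correct and follows essentially the same route as the paper: induction on the node index, using the $0$-labelled edge equation $s_i^{t+1}=0$ together with $\lambda_{r_R}=1$ to show the newly appended rightmost element is $0$. Your explicit strengthening of the inductive hypothesis to ``the entire $0^{\text{th}}$ tuple is zero'' (rather than only the postfix) is in fact needed, since the new element depends on all $r_L+r_R$ entries including the prefix; the paper's proof uses this silently, so your version is the more careful statement of the same argument.
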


\begin{proof}
We use \textbf{mathematical induction} to prove it:
	
\textbf{For the \bm{$1^{st}$} node},
it's obvious that $postfix_{0} = (0,...,0,...,0)$

\textbf{For the \bm{$k^{th}$} node}, we suppose $postfix_{0} = (0,...,0,...,0)$.

\textbf{For the \bm{$(k+1)^{th}$} node}, we just need to prove: the rightmost element of $postfix_{0}$ in the $(k+1)^{th}$ node is still $0$.
According to \textbf{Eq.(\ref{eq:second})},
	$s_{i}^{t+1} = [\lambda_{-r_{L}}s_{i-r_{L}}^{t} +...+ \lambda_{-1}s_{i-1}^{t} + \lambda_{0}s_{i}^{t} + \lambda_{1}s_{i+1}^{t} +...+ \lambda_{r_{R}}s_{i+r_{R}}^{t}]\quad \mod \quad 2$.
0-labelled edge means $s_{i}^{t+1} = 0$. Because $postfix_{0} = (0,...,0,...,0)$ in $k^{th}$ node, $s_{i-r_{L}}^{t}$,...,$s_{i}^{t}$,...,$s_{i+r_{R}-1}^{t}$ are all "0", so:
	$\lambda_{r_{R}}s_{i+r_{R}}^{t} = 0$
	
So the rightmost element $s_{i+r_{R}}$ in the $(k+1)^{th}$ node is 0.

So 	$postfix_{0} \equiv (0,...,0,...,0)$ and the theorem is proved.
\end{proof}

\begin{theorem}
	\label{thm6}
	Over ${\Bbb Z}_2$, ${\forall} r_{R}$, the initial node in a DFA with 0-labelled edge has $2^{r_R}$ tuples whose postfixes are  $2^{r_{R}}$ "0-1" permutation of length $r_R$.
	In the initial node, We can denote all $2^{r_R}-1$ postfixes except ${postfix}_0 \equiv 0$ with a linear combination of $r_R$ standard basis postfixes, which are as follows:
	\begin{center}
	${postfix}_{2^{0}} = (1,0,\ldots,0,\ldots,0) = {SBP}_{0}$\\
	${postfix}_{2^{1}} = (0,1,\ldots,0,\ldots,0) = {SBP}_{1}$\\
	...\\
	${postfix}_{2^{r_R-1}} = (0,0,\ldots,0,\ldots,1) = {SBP}_{r_R-1}$.
	\end{center}

\end{theorem}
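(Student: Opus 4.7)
The plan is to break the claim into two parts that correspond to the two sentences of the statement. The first sentence is essentially a recap of \textbf{Step 1} of the DFA construction given in Subsection \ref{Brief_Description_of_DFA}, so I would just appeal to that construction: the initial node is defined by setting every prefix to $(0,\ldots,0)$ of length $r_L$ and letting the postfixes range over all $2^{r_R}$ binary strings of length $r_R$. This immediately gives that the $2^{r_R}$ postfixes are exactly the full set of $0$-$1$ permutations of length $r_R$, and in particular the $r_R$ strings ${\rm SBP}_0,\ldots,{\rm SBP}_{r_R-1}$ listed in the statement all appear among them.

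For the second sentence, I would argue purely by linear algebra over $\mathbb{Z}_2$. The set of postfixes is the whole vector space $\mathbb{Z}_2^{r_R}$, which has dimension $r_R$. The vectors ${\rm SBP}_0,\ldots,{\rm SBP}_{r_R-1}$ are the standard unit vectors $e_1,\ldots,e_{r_R}$, and any nonzero $v=(v_1,\ldots,v_{r_R})\in\mathbb{Z}_2^{r_R}$ satisfies
\[
v \;=\; \sum_{i=0}^{r_R-1} v_{i+1}\,{\rm SBP}_i \pmod{2},
\]
so every one of the $2^{r_R}-1$ nonzero postfixes is uniquely expressed as a $\mathbb{Z}_2$-linear combination of the ${\rm SBP}_i$'s. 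The zero postfix ${\rm postfix}_0$ is the only one excluded, which matches the statement.

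There is no real obstacle here beyond fixing notation: the result is just the observation that the collection of postfixes in the initial node is all of $\mathbb{Z}_2^{r_R}$ and that $\{{\rm SBP}_0,\ldots,{\rm SBP}_{r_R-1}\}$ is the standard basis of this space. The value of the lemma is not in its difficulty but in setting up a coordinate system on postfixes that will let later arguments in Section \ref{Reversibility} track the evolution of an arbitrary postfix under the $0$-labelled transition by tracking only the $r_R$ basis postfixes, which is presumably where the "standard-basis-postfix algorithm" earns its efficiency.
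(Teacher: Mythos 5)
Your proposal is correct and matches the paper's own argument: the paper likewise notes that the initial node's postfixes are all $2^{r_R}$ binary strings of length $r_R$ (from Step 1 of the DFA construction) and expresses each nonzero ${postfix}_i$ as $\sum_j e_j\,{SBP}_{j-1}$ with the coefficients $e_j$ being exactly its entries. Your phrasing in terms of $\mathbb{Z}_2^{r_R}$ and the standard basis is just a cleaner statement of the same one-line observation.
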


\begin{proof}
	In the initial node,
	${\forall} {postfix}_{i}$ $(i \neq 0)$, we set a group of coefficients $(e_1,e_2,\ldots,e_i,\ldots,e_{r_R}) \neq (0,...0,...,0), e_{i} \in \{0,1\}$.
	Then the postfix can be denoted by the linear combination of $r_{R}$ standard basis postfixes mentioned in the theorem:
	\begin{center}
	${postfix}_i = e_{1}*{postfix}_{2^{0}} + e_2*{postfix}_{2^{1}} + \cdots + e_i * {postfix}_{2^{i}} + \cdots + e_{r_R}*{postfix}_{2^{r_R-1}}.$
	\end{center}
\end{proof}

\begin{theorem}
	The linear combination of standard basis postfixes with coefficients $e_{1}$, $e_{2}$...$e_{r_{R}}$ are used to represent $\quad{postfix}_{1}, ..., {postfix}_{i}, ..., {postfix}_{2^{r_{R}-1}}$.
	There are $2^{r_{R}}-1$ different types of linear combination with coefficients $e_{1}$, $e_{2}$...$e_{r_{R}}$  $(e_1,e_2,\ldots,e_i,\ldots,e_{r_R}) \neq (0,...0,...,0)$ corresponding to $2^{r_{R}}-1$ different types of $postfix_{i}$ ( $0<i<2^{r_{R}}$), which is a bijection.
\end{theorem}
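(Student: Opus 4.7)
The plan is to reduce the claim to the trivial observation that the standard basis postfixes $SBP_0, SBP_1, \ldots, SBP_{r_R-1}$ are literally the standard basis vectors of the vector space $\mathbb{Z}_2^{r_R}$, so that forming the linear combination with coefficients $(e_1, e_2, \ldots, e_{r_R})$ just reproduces the vector $(e_1, e_2, \ldots, e_{r_R})$ itself. Under this identification, the claim becomes the tautology that the identity map on $\mathbb{Z}_2^{r_R} \setminus \{(0,\ldots,0)\}$ is a bijection.

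More concretely, I would proceed in three short steps. First, substitute the explicit form of each $SBP_{j}$ into
\[
e_1 \cdot SBP_0 + e_2 \cdot SBP_1 + \cdots + e_{r_R} \cdot SBP_{r_R-1}
\]
and compute coordinate by coordinate over $\mathbb{Z}_2$; the $j$-th coordinate reduces to $e_j$ because only $SBP_{j-1}$ contributes in that slot. Hence the linear combination evaluates to the length-$r_R$ tuple $(e_1,\ldots,e_{r_R})$. Second, invoke Theorem \ref{thm6}, which guarantees that the initial node contains all $2^{r_R}$ postfixes of $\mathbb{Z}_2^{r_R}$; removing the zero postfix (which corresponds to the excluded coefficient tuple $(0,\ldots,0)$) leaves exactly $2^{r_R}-1$ distinct nonzero postfixes. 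Third, since the map $(e_1,\ldots,e_{r_R}) \mapsto e_1 \cdot SBP_0 + \cdots + e_{r_R} \cdot SBP_{r_R-1}$ is a well-defined map from a set of size $2^{r_R}-1$ to a set of the same size, and is injective by the coordinate computation above (distinct coefficient tuples produce distinct coordinate tuples), it is automatically a bijection between the $2^{r_R}-1$ nonzero coefficient tuples and the $2^{r_R}-1$ nonzero postfixes.

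There is no real obstacle here: the argument is essentially a restatement of what it means for a set of vectors to be a basis. The only care needed is to keep track of the indexing conventions (postfix indices running over $i$ with $0<i<2^{r_R}$ versus coefficient indices running from $1$ to $r_R$) and to explicitly cite Theorem \ref{thm6} to identify the set of postfixes in the initial node with all of $\mathbb{Z}_2^{r_R}$, so that the counting argument for bijectivity is valid.
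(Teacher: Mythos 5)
Your proposal is correct and follows essentially the same route as the paper's own proof: both reduce the claim to counting the $2^{r_R}-1$ nonzero coefficient tuples against the $2^{r_R}-1$ nonzero postfixes guaranteed by Theorem \ref{thm6}. The only difference is that you make the injectivity explicit via the coordinate-by-coordinate computation showing the linear combination equals $(e_1,\ldots,e_{r_R})$ itself, which the paper leaves implicit; this is a welcome tightening but not a different argument.
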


\begin{proof}
	$e_{i} \in \{0,1\}$, so $e_{1}$, $e_{2}$...$e_{r_{R}}$ could be all "0-1" permutation of length $r_R$ except $(0,...,0,...,0)$, so there are $2^{r_{R}}-1$ different types of coefficients $e_{1}$, $e_{2}$...$e_{r_{R}}$ correpsonding to $2^{r_{R}}-1$ different types of postfixes.
	According to \textbf{Theorem \ref{thm6}}, 
	For $2^{r_{R}}-1$ different postfixes, there must be $2^{r_{R}}-1$ different coefficients for ${postfix}_i = e_{1}*{postfix}_{2^{0}} + e_2*{postfix}_{2^{1}} + \cdots + e_i * {postfix}_{2^{i}} + \cdots + e_{r_R}*{postfix}_{2^{r_R}-1}(0<i<2^{r_{R}}).$
	So there is a bijection between $2^{r_{R}}-1$ different kinds of coefficients $e_{1}$, $e_{2}$...$e_{r_{R}}$ and ${postfix}_{1}, ... ,{postfix}_{2^{r_{R}}-1}$.
\end{proof}

\begin{figure}[!htbp]
	\small
	\centering
	\includegraphics[width=18cm]{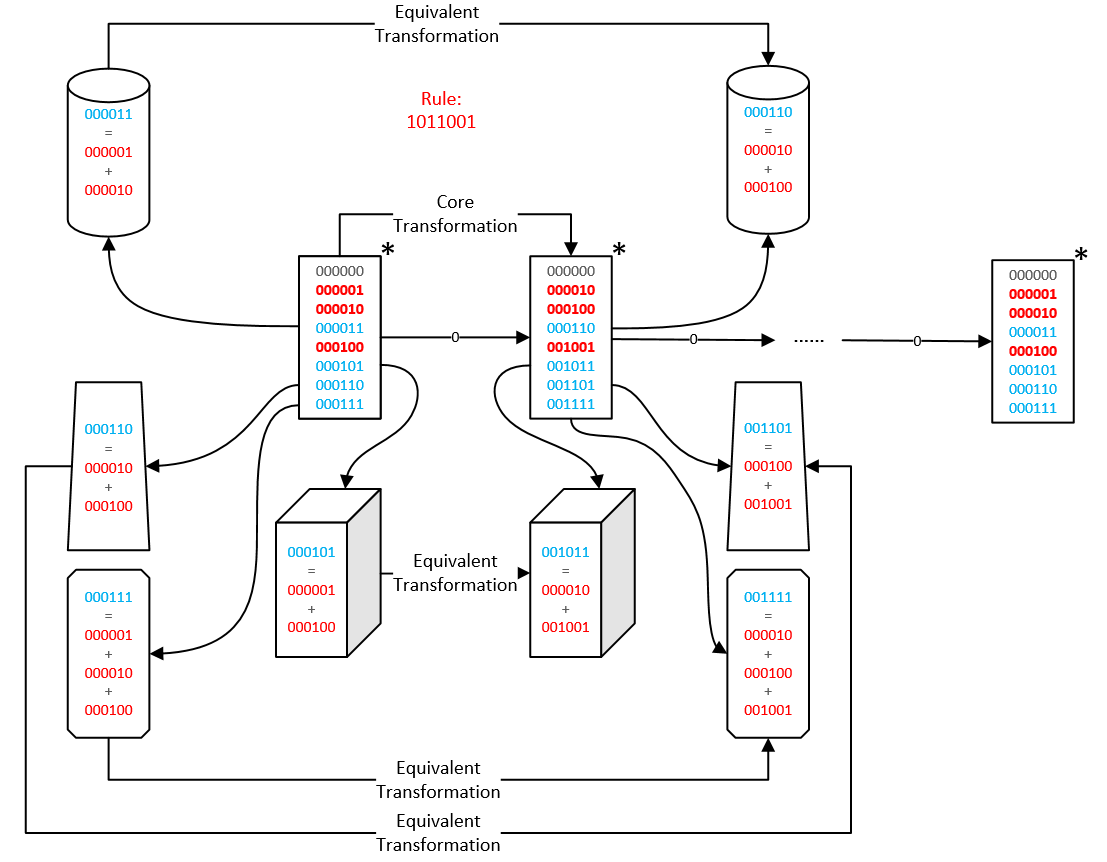}
	\caption{Use $r_R$ tuples to represent $2^{r_R}- r_{R} - 1$ tuples in the initial node, where $r_L=r_R=3$. Every tuple transformed from the initial node can still be represented by the linear combination of tuples in its current node of index $2^{j}(j = 0,1,..,r_{R}-1)$ with the same coefficients $e_{1}$, $e_{2}$...$e_{r_{R}}$ as it is represented in the initial node.
	}\label{flow_split}
\end{figure}

\textbf{Fig.\ref{flow_split}} shows how to use linear combination of $r_R$ tuples to represent all the other $2^{r_{R}} - r_{R} - 1$ tuples. The same linear combination coefficients retain in the following nodes.

\begin{theorem}
	\label{bijection}
	Every tuple transformed from the initial node can still be represented by the linear combination of tuples in its current node of index $2^{j}(j = 0,1,..,r_{R}-1)$ with the same coefficients $e_{1}$, $e_{2}$...$e_{r_{R}}$ as it is represented in the initial node, which means:\\
	\textbf{If in node $k$:}\\${tuple}_{i}$ =
	$\sum_{j = 0}^{r_{R}-1} {e_{j+1}} {tuple}_{2^{j}}$\\
	= $e_{1}{tuple}_{2^{0}} + e_{2}{tuple}_{2^{1}} +...+ e_{r_{R}}{tuple}_{2^{r_{R}-1}}$,\\
	\textbf{then in node $k+1$:}\\
	${tuple}_{i}$ =
	$\sum_{j = 0}^{r_{R}-1} {e_{j+1}} {tuple}_{2^{j}}$\\
	= $e_{1}{tuple}_{2^{0}} + e_{2}{tuple}_{2^{1}} +...+ e_{r_{R}}{tuple}_{2^{r_{R}-1}}$,
	\textbf{The coefficients \bm{$e_{1}, e_{2}, ..., e_{r_{R}}$} are the same between node $k$ and node $k+1$.}
	\label{retain}
\end{theorem}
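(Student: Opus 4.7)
The plan is to exhibit the 0-edge transition from node $k$ to node $k+1$ as a $\mathbb{Z}_2$-linear map on tuples, and then invoke linearity to conclude. First I would write the successor operation explicitly. Given a tuple $(s_{i-r_L}^t,\ldots,s_{i+r_R-1}^t)$ in node $k$, its 0-labelled successor in node $k+1$ is obtained by discarding the leftmost coordinate, shifting the remaining coordinates one place to the left, and appending a new rightmost coordinate $s_{i+r_R}^t$ determined by the constraint $s_i^{t+1}=0$. From \textbf{Eq.(\ref{eq:second})} together with $\lambda_{r_R}=1$, this gives
\begin{equation*}
s_{i+r_R}^t \;=\; \sum_{k=-r_L}^{r_R-1} \lambda_k\, s_{i+k}^t \pmod 2,
\end{equation*}
which is a $\mathbb{Z}_2$-linear functional of the entries of the current tuple.

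Next I would package this as a single map $T:\mathbb{Z}_2^{r_L+r_R}\to\mathbb{Z}_2^{r_L+r_R}$ that shifts an input tuple one place to the left and appends the linear functional above. Because $T$ is the composition of a coordinate projection, a shift, and a linear-functional append, it is $\mathbb{Z}_2$-linear. By construction, for every tuple $u$ sitting in node $k$, $T(u)$ is precisely the tuple in node $k+1$ reached from $u$ along the unique 0-edge described in Step 2 of Section \ref{Brief_Description_of_DFA}; hence the successor assignment implicit in the labelling $tuple_i$ agrees with $T$ on every labelled tuple.

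Now I would apply linearity to close the induction. Under the inductive hypothesis $tuple_i=\sum_{j=0}^{r_R-1} e_{j+1}\, tuple_{2^j}$ in node $k$, applying $T$ to both sides and using its $\mathbb{Z}_2$-linearity yields
\begin{equation*}
T(tuple_i)\;=\;\sum_{j=0}^{r_R-1} e_{j+1}\, T(tuple_{2^j}).
\end{equation*}
Reinterpreting each $T(\cdot)$ as the corresponding indexed tuple sitting in node $k+1$ gives $tuple_i=\sum_{j=0}^{r_R-1} e_{j+1}\, tuple_{2^j}$ in node $k+1$ with the very same coefficients $e_1,\ldots,e_{r_R}$. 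Combined with the base case of the initial node supplied by \textbf{Theorem \ref{thm6}}, this yields the desired statement by induction on the node index.

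The main obstacle is essentially bookkeeping rather than mathematical depth: one must be careful that the index $i$ in $tuple_i$ really tracks the same tuple across consecutive nodes (propagated along 0-edges from its initial-node representative) rather than being re-assigned inside each node. Once this is pinned down using the determinism of the 0-edge transition, the entire argument collapses to the linearity of the local update rule over $\mathbb{Z}_2$, which transfers through the shift-and-append structure of $T$ without further work.
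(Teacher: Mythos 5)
Your proposal is correct and follows essentially the same route as the paper's own proof: induction on the node index with the initial node as base case, and the observation that the 0-labelled transition (shift left, append the element forced by $s_i^{t+1}=0$ with $\lambda_{r_R}=1$) is $\mathbb{Z}_2$-linear, so it preserves the linear-combination coefficients. The paper merely writes out this linearity coordinate-by-coordinate instead of packaging it as a single linear map $T$.
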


\begin{proof}
We use \textbf{mathematical induction} to prove it:

\textbf{For the \bm{$1^{st}$} node} (the initial node):
Every tuple have the same prefix "0...0"(all-zeros) of length $r_{L}$, and we have proved every ${postfix}_{i}$ in the initial node can be represented by the linear combination of standard basis postfixes according to \textbf{Theorem \ref{thm6}}. So it's obvious\\ ${tuple}_{i} = ({prefix}_{i} , {postfix}_{i})$ = $e_{1}*({prefix}_{2^{0}} , postfix_{2^{0}}) + e_2*({prefix}_{2^{1}} , {postfix}_{2^{1}}) + \cdots + e_i * ({prefix}_{2^{i}} , {postfix}_{2^{i}}) $ + $\cdots + e_{r_R}*({prefix}_{2^{r_R-1}},{postfix}_{2^{r_R-1}}).$\\

\textbf{For the \bm{$k^{th}$} node}, we suppose the theorem is true.

\textbf{For the \bm{$({k+1})^{th}$} node}, we just need to prove: the rightmost element of its transformed tuple in the $(k+1)^{th}$ node can still be represented by the same group of tuples with index $2^{j}(j = 0,1,..,r_{R}-1)$ and the same coefficients $e_{1}$, $e_{2}$...$e_{r_{R}}$ as node $k^{th}$, and we prove it as follows:

\textbf{For any \bm{$tuple_{i}$} in node $k$}, we can denote ${tuple}_{i}$ as follows:
	${tuple}_{i}$ = 
	$\sum_{j = 0}^{r_{R}-1} {e_{j+1}}{tuple}_{2^{j}}$\\
	= $e_{1}{tuple}_{2^{0}} + e_{2}{tuple}_{2^{1}} +...+ e_{r_{R}}{tuple}_{2^{r_{R}-1}}$\\
	= $(\sum_{j = 0}^{r_{R}-1} e_{j+1} * {tuple}_{2^{j}}[0],\sum_{j = 0}^{r_{R}-1} e_{j+1} * {tuple}_{2^{j}}[1],...,\sum_{j = 0}^{r_{R}-1} e_{j+1} * {tuple}_{2^{j}}[r_{L}-1], \sum_{j = 0}^{r_{R}-1} e_{j+1} * {tuple}_{2^{j}}[r_{L}], \sum_{j = 0}^{r_{R}-1} e_{j+1} * {tuple}_{2^{j}}[r_L+1], ..., \sum_{j = 0}^{r_{R}-1} e_{j+1} * {tuple}_{2^{j}}[r_{L}+r_{R}-1])$.
	
Use \textbf{Eq.(\ref{eq:second})} with 0-labelled edge, for \bm{${tuple}_{i}$}, the rightmost element in the next node is:

		$\lambda_{r_{R}} * s_{i+r_{R}}$ 
		(of ${postfix}_{i}$) = 
		$\lambda_{-r_{L}} * \sum_{j = 0}^{r_{R}-1} e_{j+1} * {tuple}_{2^{j}}[0] + \lambda_{-r_{L}+1} * \sum_{j = 0}^{r_{R}-1} e_{j+1} * {tuple}_{2^{j}}[1] + ... + \lambda_{-1} * \sum_{j = 0}^{r_{R}-1} e_{j+1} * {tuple}_{2^{j}}[r_{L}-1] + 
		\lambda_{0} * \sum_{j = 0}^{r_{R}-1} e_{j+1} * {tuple}_{2^{j}}[r_{L}]+ \lambda_{1} * \sum_{j = 0}^{r_{R}-1} e_{j+1} * {tuple}_{2^{j}}[r_L+1] +... 
		\lambda_{r_{R}-1} * \sum_{j = 0}^{r_{R}-1} e_{j+1} * {tuple}_{2^{j}}[r_{L}+r_{R}-1]$\\

\textbf{For any \bm{$tuple_{2^{j}}$} in node $k$}, use \textbf{Eq.(\ref{eq:second})} with 0-labelled edge, the rightmost element in the next node is:
	
$\lambda_{r_{R}} * s_{i+r_{R}}$
(of ${tuple}_{2^{j}}$) =
$\lambda_{-r_{L}} * {tuple}_{2^{j}}[0] + \lambda_{-r_{L}+1} * {tuple}_{2^{j}}[1] + ... + \lambda_{-1} * {tuple}_{2^{j}}[r_{L}-1]$ + $\lambda_{0} * {tuple}_{2^{j}}[r_{L}]+ \lambda_{1} * {tuple}_{2^{j}}[r_L+1] +...  {tuple}_{2^{j}}[r_{L}+r_{R}-1]$\\

From the relationship between the right parts of the two equations of $tuple_{i}$ and $tuple_{2^{j}}$, we know:
$\lambda_{r_{R}} * s_{i+r_{R}}$ (of \bm{${tuple}_{i}$})= 
$\lambda_{r_{R}} * \sum_{j = 0}^{r_{R}-1} e_{j+1} * s_{i+r_{R}}$ (of \bm{${tuple}_{2^{j}}}$),
from which we could conclude that the rightmost number in node $k+1$ can also be represented by the same coefficients $e_{1}$, $e_{2}$...$e_{r_{R}}$ as node $k$.
Furthermore, according to (2) in "Important Theorem" in \textbf{Section \ref{Brief_Description_of_DFA}}: Two adjacent nodes have the same elements except the leftmost element of the first node and the rightmost element of the second node.

So the coefficients $e_{1}$, $e_{2}$...$e_{r_{R}}$ are the same between $k^{th}$ node and $({k+1})^{th}$ node.

So the theorem is proved.
	
\end{proof}

With the explanation of 0-labelled edge in \textbf{Section \ref{Brief_Description_of_DFA}}, we define $\stackrel{L}{\longrightarrow} \quad = \quad \stackrel{L_{1}}{\longrightarrow}\stackrel{L_{2}}{\longrightarrow}...\stackrel{L_{n}}{\longrightarrow}$ as a group of transformation where $\stackrel{L_{i}}{\longrightarrow}$ represent a 0-labelled transformation from a tuple in a previous node to its corresponding tuple in the next node.

\begin{example}
	Over ${\Bbb Z}_2$, we assume $a$, $b$ are two postfixes. If
	$a\stackrel{L}{\longrightarrow}a_1$, $b\stackrel{L}{\longrightarrow}b_1$,
	then $a + b\stackrel{L}{\longrightarrow}a_1 + b_1$.
	
	In \textbf{Fig.\ref{flow_split}},
	$r_R=3$, we have $2^{3}-1$ postfixes except ${postfix}_{0}$
	and these postfixes are $(0,0,1)$, $(0,1,0)$, $(0,1,1)$, $(1,0,0)$, $(1,0,1)$, $(1,1,0)$, $(1,1,1)$. 
	Then we can use the linear combination of $(0,0,1)$, $(0,1,0)$ and $(1,0,0)$ to denote all the other four postfixes as follows: 
	$(0,1,1)=(0,0,1) + (0,1,0)$, $(1,0,1)=(0,0,1) + (1,0,0)$, 
	$(1,1,0)=(0,1,0) + (1,0,0)$, $(1,1,1)= (0,0,1) + (0,1,0) + (1,0,0)$.
	
	We define the transformation which is on the top of \textbf{Fig.\ref{flow_split}} as $L_{1}$,\\
	$(0,1,1)=(0,0,1) + (0,1,0)$.
	If $(0,0,1)\stackrel{L_{1}}{\longrightarrow}(0,1,0)$, $(0,1,0)\stackrel{L_{1}}{\longrightarrow}(1,0,0)$,
	then $(0,1,1) = (0,0,1) + (0,1,0)\stackrel{L_{1}}{\longrightarrow} (0,1,0) + (1,0,0) = (1,1,0)$.
\end{example}

\begin{theorem}
\label{thm7}
${Trans}_{1}, {Trans}_{2},..., {Trans}_{r_{R}}$ are $r_R$ postfixes in a node which transformed from ${SBP}_1$, $\ldots$, ${SBP}_i$, $\ldots$, ${SBP}_{r_R}$ in the initial node. If ${Trans}_{1}, {Trans}_{2},..., {Trans}_{r_{R}}$ are linearly independent,
we classify this node as reversible, or otherwise we classify the node as irreversible.
\end{theorem}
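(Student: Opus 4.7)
The plan is to combine the coordinate-preservation property established in Theorem~\ref{bijection} with the postfix-collision criterion of Theorem~\ref{extra}. By Theorem~\ref{bijection}, the coefficients $(e_1,\dots,e_{r_R})$ that express a tuple of index $i = e_1 2^0 + \cdots + e_{r_R} 2^{r_R-1}$ as a linear combination of the basis tuples at indices $2^0, 2^1, \dots, 2^{r_R-1}$ are invariant along 0-labelled transformations. Restricting to postfixes, this says the postfix of tuple $i$ in the current node equals $\sum_{j=1}^{r_R} e_j \mathrm{Trans}_j$, with exactly the same coefficients $(e_1,\dots,e_{r_R})$ that expressed the original postfix as a linear combination of $\mathrm{SBP}_1,\dots,\mathrm{SBP}_{r_R}$ in the initial node.

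First, I would enumerate the $2^{r_R}$ tuples in the current node by the $2^{r_R}$ coefficient vectors $(e_1,\dots,e_{r_R})\in\{0,1\}^{r_R}$ (with the all-zero vector corresponding to $\mathrm{postfix}_0 \equiv 0$, which is fixed). By Theorem~\ref{thm6}, this enumeration is a bijection with the postfix indices in the initial node. Under the map induced by the 0-labelled transformations, the postfix of index $(e_1,\dots,e_{r_R})$ becomes $\sum_{j=1}^{r_R} e_j \mathrm{Trans}_j$ in the current node.

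Next, I would split into the two cases. If $\mathrm{Trans}_1,\dots,\mathrm{Trans}_{r_R}$ are linearly independent over $\mathbb{Z}_2$, then the linear map $(e_1,\dots,e_{r_R}) \mapsto \sum_j e_j \mathrm{Trans}_j$ is injective; hence the $2^{r_R}$ postfixes in the current node are pairwise distinct, so by Theorem~\ref{extra} the node is reversible. Conversely, if they are linearly dependent, then some nonzero coefficient vector $(e_1,\dots,e_{r_R})$ satisfies $\sum_j e_j \mathrm{Trans}_j = 0$; this vector and the all-zero vector yield two distinct tuples whose postfixes coincide (both equal to $\mathrm{postfix}_0$), so Theorem~\ref{extra} forces the node to be irreversible.

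I do not expect a genuine obstacle: once Theorem~\ref{bijection} is invoked, the argument collapses to the standard linear-algebra dichotomy between injectivity and a nontrivial kernel. The only delicate point is bookkeeping, namely checking that the bijection between coefficient vectors and tuple indices is preserved through the transformation and that the zero combination indeed corresponds to $\mathrm{postfix}_0$, both of which follow immediately from the statements of Theorems~\ref{thm6} and~\ref{bijection}.
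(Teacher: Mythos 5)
Your proposal is correct and follows essentially the same route as the paper: both invoke Theorem~\ref{bijection} to carry the coefficient representation from the initial node to the current node, then apply Theorem~\ref{extra}, treating linear independence as injectivity of the map $(e_1,\dots,e_{r_R})\mapsto\sum_j e_j\,\mathrm{Trans}_j$ (the paper phrases the injectivity step as a proof by contradiction on a pair of coinciding postfixes, which is the same kernel argument) and, in the dependent case, exhibiting a nonzero combination that collides with $\mathrm{postfix}_0\equiv 0$.
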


\begin{proof}
For the former part of this theorem:
if ${Trans}_{1}, {Trans}_{2},..., {Trans}_{r_{R}}$ are linearly independent, Then according to \textbf{Theorem \ref{bijection}}, ${\forall}$ $(e_{1}$, $e_{2}$...$e_{r_{R}}) \neq (0,...,0,...,0)$, we have $e_1*{Trans}_1+\cdots+e_i*{Trans}_i+\cdots+e_{r_R}*{Trans}_{r_R} \neq (0,...,0,...,0)$ in this node. \\
Next we need to prove: for any two $postfix_{m}$ and $postfix_{n} (0 < m,n < r_{R})$, $e_1*{Trans}_1+\cdots+e_i*{Trans}_i+\cdots+e_{r_R}*{Trans}_{r_R}$ won't be the same.

We use reduction to absurdity to prove it:
First define $postfix_{m}$ and $postfix_{n} (0 < m,n < r_{R})$, $postfix_{m}$ = $e_1^{m}*{Trans}_1+\cdots+e_i^{m}*{Trans}_i+\cdots+e_{r_R}^{m}*{Trans}_{r_R}$
and $postfix_{n}$ = $e_1^{n}*{Trans}_1+\cdots+e_i^{n}*{Trans}_i+\cdots+e_{r_R}^{n}*{Trans}_{r_R}$.
We suppose $postfix_{m}$ = $postfix_{n}(0 < m,n < r_{R})$. Because $e_{i} \in \{0,1\}$, according to \textbf{Theorem \ref{bijection}}, 

there must be $postfix_{o} = e_1^{o}*{Trans}_1+\cdots+e_i^{o}*{Trans}_i+\cdots+e_{r_R}^{o}*{Trans}_{r_R} = (0,...,0,...,0)$ where $e_1^{o} = e_1^{m} + e_1^{n}$, ..., $e_i^{o} = e_i^{m} + e_i^{n}$, ...,$e_{r_{R}}^{o} = e_{r_{R}}^{m} + e_{r_{R}}^{n}$(over ${\Bbb Z}_2$), which contradict with the conclusion: $e_1*{Trans}_1+\cdots+e_i*{Trans}_i+\cdots+e_{r_R}*{Trans}_{r_R} \neq (0,...,0,...,0)$ in this node.

So every two postfixes in this node is different. According to \textbf{Theorem \ref{extra}}, the node is reversible. So the former part of the theorem is proved.

For the latter part of this theorem:	
if ${Trans}_{1}, {Trans}_{2},..., {Trans}_{r_{R}}$ has a linear correlation with coefficients ($e_1$,...,$e_i$,...,$e_{r_R}$) and 
$e1*{Trans}_1+\cdots+e_i*{Trans}_i+\cdots+e_{r_R}*{Trans}_{r_R}= (0,...,0,...,0)$ in the current node. 
Then according to \textbf{Theorem \ref{bijection}}, there must be an initial postfix ${postfix}_{i}$ in the first node which equals to
$e_1*{SBP}_1+\cdots+e_i*{SBP}_i+\cdots+e_{r_R}*{SBP}_{r_R}=0$, i.e.$(e_{1},e_{2},...,e_{r})$,
and it is transformed to $(0,\ldots,0,\ldots,0)$ in the current node.
So we have two or more postfixes equal to ${postfix}_0=(0,\cdots,0,\cdots,0)$ in current node including ${postfix}_0 \equiv 0$.
According to \textbf{Theorem \ref{extra}}, we classify the current node as irreversible.

\end{proof}

\begin{example}

When $r_R=5$, if we have one $postfix_{i}$ = $(1,0,1,0,1)$ in the initial node,\\
if $ (1,0,1,0,1)\stackrel{L}{\longrightarrow} (0,0,0,0,0) $, and
\begin{center}
$ (0,0,0,0,1)\stackrel{L}{\longrightarrow}a $\\
$ (0,0,0,1,0)\stackrel{L}{\longrightarrow}b $\\
$ (0,0,1,0,0)\stackrel{L}{\longrightarrow}c $\\
$ (0,1,0,0,0)\stackrel{L}{\longrightarrow}d $\\
$ (1,0,0,0,0)\stackrel{L}{\longrightarrow}e $\\
\end{center}
Based on \textbf{Theorem \ref{thm7}},
we can conclude the current node is reversible if $a, b, c, d, e$ are linearly independent.
On the other hand, if there is a linear correlation among $a, b, c, d, e$, we can conclude the current node is irreversible.
\end{example}

\subsection{Process summary}
\begin{figure}[!htbp]
\small
\centering
\includegraphics[width=18cm]{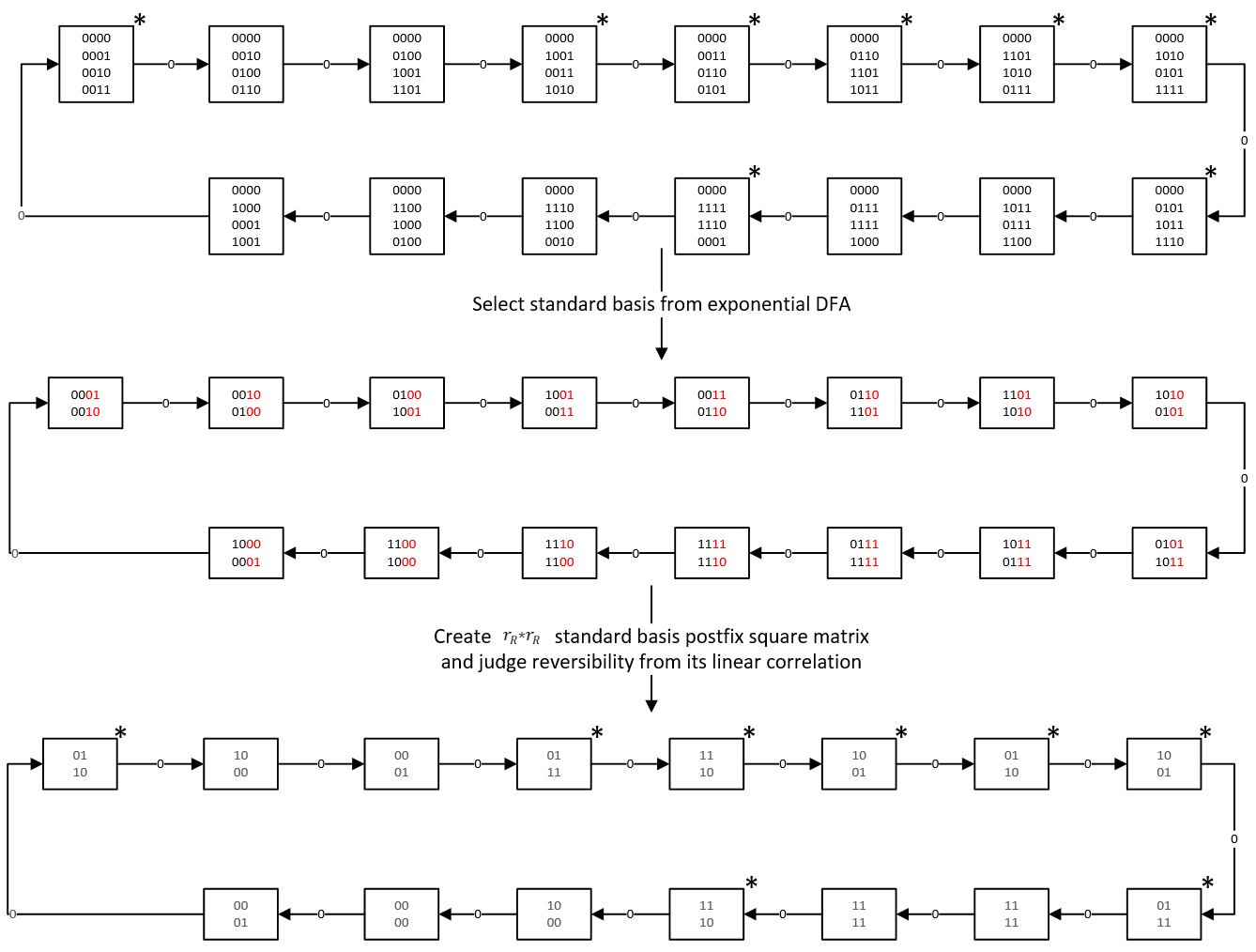}
\caption{Use standard basis postfix to judge reversibility}
\label{linear_subset_sequence}
\end{figure}

\textbf{Fig. \ref{linear_subset_sequence}} shows the process of using standard basis postfix to judge whether a node is reversible in a period.

\textbf{SBP(Standard-basis-postfix) algorithm }
\begin{description}
\item[Step1.] Take $r_R$ tuples which contain $r_R$ standard basis postfixes in initial node to construct a $r_R*(r_L + r_R)$ subset node.

\item[Step2.] Construct 0-labelled edge to the next node.

\item[Step3.] Repeat \textbf{Step2} until the node number reach the period we get in Section.\ref{Period} to form a 0-edged circle back to initial node. 

\item[Step4.] Construct a square matrix with size $r_R*r_R$ formed by standard basis postfixes.

\item[Step5.] Row reduce the matrix to an echelon form to determine whether this matrix is linearly independent.

\item[Step6.] Determine whether the current node is reversible by \textbf{Step5} and mark reversible node with "*".

\end{description}

\subsection{Comparison with former algorithm}
Finally we compare this algorithm with the former DFA in \cite{yang2015reversibility}.

In the original DFA algorithm, we have to store all $2^{r_R}$ tuples with length $r_L+r_R$ and verify if they repeat,
so we have a time complexity of $O\left(r_R*2^{r_R}\right)$ and a space complexity of $O\left((r_L+r_R)*2^{r_R}\right)$.
Thus for LCA of a slightly big $r_R$, we soon get memory overflow or it becomes incalculable for time.

In current standard-basis-postfix algorithm, we only need to have $r_R$ rows and verify the linear correlation
in this $r_R*r_R$ matrix by row reducing the matrix to an echelon form.
In this way, we reduce time complexity to $O(r_R^3)$ and space complexity to $O\left(r_R*(r_L+r_R)\right)$.

The comparison is shown in \textbf{Table \ref{tab2}}.\\

\begin{table}
	\caption{Comparison among three methods to verify the reversibility of LCA}
	\centering
\begin{tabular}{ccc}
	
	\hline
	Algorithm &  Time complexity    & Spatial complexity  \\
	\hline
	DFA   & $O\left(r_R*2^{r_R}\right)$   & $O\left((r_L+r_R)*2^{r_R}\right)$    \\
	Standard basis postfix 	 & $O({r_{R}}^3)$		& $O\left(r_R*(r_L+r_R)\right)$\\	\hline
\end{tabular}
	\label{tab2}
\end{table}

\section{Generate LCA rules with given period $T$ and give a lower bound of these rules' quantity}
\label{Generate}

\subsection{Give a unique period factorization}


Now, we consider an interesting inverse problem: Given a positive integer $T$,
can we generate an LCA rule whose period of the reversibility is $T$? 
If the answer is 'Yes', how to generate such LCA rule? And how many LCA rules with the period $T$ can we generate?

After analysis of \textbf{Theorem \ref{thm_fac}}, we divide this formula
into two parts, the former part is $lcm[p(f_1), p(f_2), \ldots, p(f_r)]$,
the latter part is $\min\{2^t|2^t \ge e_1, e_2,\ldots, e_r\}$.
We use $U$ to represent the former part and $V$ to represent the latter part from now on in this paper.

Suppose $T$ is the given number as a period. Our goal is to find an appropriate LCA rule whose polynomial is
$f(x)$ such that the period of the reversibility of it is $T$, i.e., $T=p(f)$.

First, let $T=UV$, where $U$ is an odd integer and $V=2^t$.

\begin{lemma}
	\label{lemma_odd}
	The period of any irreducible polynomial is odd over ${\Bbb Z}_2$.
\end{lemma}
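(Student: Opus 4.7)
The plan is to use the standard interpretation of the period of a polynomial in terms of the order of $x$ in a quotient ring, combined with the fact that finite fields of characteristic $2$ have odd multiplicative order.

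First I would recall that the period $p(f)$ of a polynomial $f(x)\in\mathbb{Z}_2[x]$ with $f(0)\ne 0$ is, by definition, the smallest positive integer $T$ such that $x^T \equiv 1 \pmod{f(x)}$, equivalently the smallest $T$ with $f(x) \mid x^T - 1$. Thus $p(f)$ is precisely the multiplicative order of the residue class of $x$ in the ring $R := \mathbb{Z}_2[x]/(f(x))$.

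Next, since $f$ is irreducible over $\mathbb{Z}_2$ and has some degree $n \ge 1$, the ring $R$ is a field, namely the finite field $\mathbb{F}_{2^n}$. Its multiplicative group $R^* = \mathbb{F}_{2^n}^*$ has order $2^n - 1$. Because $f(0)\ne 0$, the element $x$ is a unit in $R$, so its order divides $|R^*| = 2^n - 1$. Observing that $2^n - 1$ is odd (it is a difference of an even number and $1$), any divisor of it is odd, and therefore $p(f)$ is odd.

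Since every step is a direct invocation of a standard fact (quotient by an irreducible polynomial is a field, Lagrange's theorem for the order of an element, parity of $2^n-1$), I do not anticipate any obstacle; the entire argument is essentially one line once the period is identified with the order of $x$ in the multiplicative group of $\mathbb{F}_{2^n}$. The only care needed is to note explicitly the hypothesis $f(0)\ne 0$, which ensures $x$ is invertible so that ``order of $x$'' makes sense, and this is exactly the nondegeneracy assumption already maintained throughout Section \ref{Period}.
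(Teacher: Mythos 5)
Your proof is correct and follows essentially the same route as the paper: both arguments reduce to the fact that $p(f)$ divides $2^n-1$, which is odd. The paper simply cites this divisibility from its Section \ref{cal_irre}, whereas you derive it from Lagrange's theorem in $\mathbb{F}_{2^n}^*$; the substance is identical.
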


\begin{proof}
	Over ${\Bbb Z}_2$, in the process of calculating irreducible polynomials' period in \textbf{Section \ref{cal_irre}},
	we know that $p(f)|2^n-1$. And it's obvious that $2^n-1$ must be an odd number.
	So for the \textbf{Theorem \ref{thm_fac}}, $p(f)$ must be an odd number.
\end{proof}

First according to \textbf{Lemma \ref{lemma_odd}}, we get the prime factorization of $p(f)$:
$p(f)=2^t\times m_1^{n_1} m_2^{n_2}\cdots m_k^{n_k}$, in which $m_i$ is an odd prime factor,
$V=2^t$ and $U=m_1^{n_1} m_2^{n_2}\cdots m_k^{n_k}$.

\begin{table}
	\begin{center}
		\caption{Part of period table of some irreducible polynomials over ${\Bbb Z}_2$}
		\begin{tabular}{| p{2cm}| p{2cm}| p{2cm}|}
			\hline
			deg($f(x)$) &  $f(x)$  & period\\
			\hline
			1& 11& 1\\
			2& 111& 3\\
			3& 1011& 7\\
			4& 11111& 5\\
			5& 100101& 31\\
			8& 100111001& 17 \\
			9& 1000000011& 73\\
			10& 11111111111& 11\\
			\hline
		\end{tabular}
		\label{tab3}
	\end{center}
\end{table}

Suppose for a given prime number $w$, there are $g(w)$ irreducible polynomials in the period table whose periods equal $w$.

Then for each prime factor in $U$ whose power is $1$, we find period $m_i$ in the polynomial period table and
directly give its corresponding polynomials $f_{i_1}, f_{i_2},..., f_{i_{g(m_i)}}$.
For those $m_{i}$ whose powers are bigger than $1$, we use the following \textbf{Theorem \ref{thm3}} 
and \textbf{Theorem \ref{thm_power}} to calculate them.

\begin{theorem}
	\label{thm3}
	Let $f(x)$ be an irreducible polynomial of period $n$ over ${\Bbb Z}_2$, and let t be a prime, $t \nmid 2$.
	If $t \mid n$, then every irreducible factor of $f(x^t)$ has period tn.
	If $t \nmid n$, then one irreducible factor of $f(x^t)$ has period $n$ and the other factors have period $tn$.
\end{theorem}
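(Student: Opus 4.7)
The plan is to translate the claim about periods into a statement about multiplicative orders of roots in an algebraic closure of ${\Bbb Z}_2$. Recall that the period of an irreducible polynomial over ${\Bbb Z}_2$ equals the multiplicative order of any of its roots in the appropriate extension field, and that all Frobenius conjugates share the same order. Fix a root $\alpha$ of $f$, so $\mathrm{ord}(\alpha)=n$, and let $\beta$ be any root of $f(x^t)$. Then $\beta^t$ is a root of $f$, hence $\mathrm{ord}(\beta^t)=n$. Writing $d=\mathrm{ord}(\beta)$, this gives $n=d/\gcd(d,t)$. Since $t$ is prime, $\gcd(d,t)\in\{1,t\}$, which forces $d\in\{n,nt\}$. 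Consequently the period of every irreducible factor of $f(x^t)$ is either $n$ or $nt$.

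Next I would dispatch the case $t\mid n$. If $d=n$, then $\gcd(d,t)=\gcd(n,t)=t$, so $n=d/\gcd(d,t)=n/t$, a contradiction. Thus $d=nt$ for every root of $f(x^t)$, and every irreducible factor has period $nt$. For the case $t\nmid n$ I would exhibit an explicit root of order $n$: since $\gcd(n,t)=1$, pick $s$ with $st\equiv 1\pmod{n}$ and set $\beta_0=\alpha^s$. A direct check yields $\beta_0^t=\alpha^{st}=\alpha$, and $\gcd(s,n)=1$ forces $\mathrm{ord}(\beta_0)=n$. Thus $\beta_0$ is a root of $f(x^t)$ of order $n$, and its Frobenius conjugates over ${\Bbb Z}_2$ form the root set of a single irreducible factor of $f(x^t)$ of period $n$.

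What remains is to verify that this factor is unique and that all remaining factors have period $nt$. I would classify every root $\beta$ of $f(x^t)$ in the form $\beta=\zeta\cdot\alpha^{s\cdot 2^k}$ for some $t$-th root of unity $\zeta$ and some $k\geq 0$: the equation $\beta^t=\alpha^{2^k}$ (for the appropriate Frobenius conjugate of $\alpha$) implies $(\beta/\alpha^{s\cdot 2^k})^t=1$, giving the stated normal form. When $\zeta=1$, $\beta=\beta_0^{2^k}$ lies in the Frobenius orbit of $\beta_0$ and therefore contributes only to the period-$n$ factor. When $\zeta\neq 1$, primality of $t$ forces $\mathrm{ord}(\zeta)=t$, and since $\gcd(t,n)=1$ the product $\zeta\cdot\alpha^{s\cdot 2^k}$ has order $tn$; so every such root lies in a factor of period $nt$. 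The main subtlety is the uniqueness half of this classification: one must know that the $t$ distinct $t$-th roots of each $\alpha^{2^k}$ really exist and are distinct in the algebraic closure, which follows from $t\nmid 2$ (so $x^t-c$ is separable for $c\neq 0$), and one must rule out any Frobenius orbit outside that of $\beta_0$ consisting entirely of order-$n$ elements, which is exactly what the $\zeta\neq 1$ case prevents.
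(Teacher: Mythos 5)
Your proof is correct, and it is worth noting up front that the paper does not actually prove this theorem at all: its ``proof'' is a one-line pointer to page 153 of Berlekamp's \emph{Algebraic Coding Theory}. You therefore supply a genuine, self-contained argument where the paper supplies only a citation. Your route --- identifying the period of an irreducible polynomial with the multiplicative order of its roots, deducing from $n=d/\gcd(d,t)$ and the primality of $t$ that every root of $f(x^t)$ has order $n$ or $nt$, killing the order-$n$ option when $t\mid n$, and in the coprime case exhibiting $\beta_0=\alpha^s$ with $st\equiv 1\pmod n$ and classifying all roots as $\zeta\cdot\alpha^{s\cdot 2^k}$ --- is the standard textbook argument and is essentially what the cited source does. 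The details all check out: $f(x^t)$ is separable because $t\nmid 2$ and $f(0)\neq 0$; Frobenius conjugation preserves order, so the $\zeta=1$ roots all land in the single minimal polynomial of $\beta_0$ (of period $n$); and for $\zeta\neq 1$ the order is $\mathrm{lcm}(t,n)=tn$ since $\gcd(t,n)=1$ (here one also uses that $n$ is odd, or simply that $\alpha^{s\cdot 2^k}$ is a conjugate of $\beta_0$ and hence has order $n$). The only stylistic suggestion is to state explicitly that the degree of the minimal polynomial of $\beta_0$ equals $\deg f$ (both being the order of $2$ modulo $n$), so that for $t\geq 3$ the ``other factors'' of period $tn$ really are present; this is implicit in your classification but deserves a sentence.
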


\begin{proof}
	The proof of this theorem can be found in page 153 of Algebraic Coding Theory\cite{berlekamp2015algebraic}.
\end{proof}

\begin{theorem}
	\label{thm_power}
	For any prime factors $m_i$ in $U$, we can get polynomials whose periods equal to any powers of $m_i$.
\end{theorem}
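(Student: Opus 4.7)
My plan is to prove the statement by induction on the exponent $k$, using \textbf{Theorem \ref{thm3}} as the principal engine. The overall strategy is that \textbf{Theorem \ref{thm3}} lets us ``inflate'' the period of an irreducible polynomial by a prime factor $t$ simply by replacing $x$ with $x^t$ and picking an irreducible divisor, so starting from an irreducible polynomial of period $m_i$ we can climb to $m_i^k$ in $k-1$ steps.

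For the base case $k=1$, I would first exhibit an irreducible polynomial over $\mathbb{Z}_2$ whose period is exactly $m_i$. Since $m_i$ is an odd prime and $\gcd(2,m_i)=1$, the polynomial $x^{m_i}-1$ factors over $\mathbb{Z}_2$ as $(x-1)\Phi_{m_i}(x)$, and every irreducible factor of $\Phi_{m_i}(x)$ must have period equal to $m_i$: the only divisors of the prime $m_i$ are $1$ and $m_i$, and period $1$ would force the factor to divide $x-1$. This secures the base case; alternatively, one can simply read $m_i$ off of the period table of irreducible polynomials introduced in \textbf{Table \ref{tab3}}.

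For the inductive step, suppose that an irreducible polynomial $f(x)\in\mathbb{Z}_2[x]$ of period $m_i^k$ has already been constructed. I would invoke \textbf{Theorem \ref{thm3}} with $t=m_i$ and $n=m_i^k$. Here $t=m_i$ is an odd prime (so $t\nmid 2$ is automatic), and obviously $m_i\mid m_i^k$, so the first clause of the theorem applies: every irreducible factor of $f(x^{m_i})$ has period $m_i\cdot m_i^k=m_i^{k+1}$. Picking any such factor yields an irreducible polynomial of period $m_i^{k+1}$, which closes the induction and produces polynomials of period $m_i^k$ for every $k\ge 1$.

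The main obstacle is essentially verifying that the hypotheses of \textbf{Theorem \ref{thm3}} genuinely hold at each iteration, which is straightforward here because $U$ by construction collects only odd primes and the divisibility $m_i\mid m_i^k$ is trivial. A subtler concern worth flagging is that \textbf{Theorem \ref{thm3}} only asserts the existence of an irreducible factor without pinning down its degree; if a downstream construction needs a bound on the degree of the resulting polynomial, one can track it through the identity $\deg f(x^{m_i}) = m_i\deg f(x)$, so the degree after $k-1$ inflations is at most $m_i^{k-1}\deg f_0$ where $f_0$ is the base irreducible polynomial of period $m_i$.
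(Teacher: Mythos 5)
Your proof is correct and follows essentially the same route as the paper: both arguments iterate \textbf{Theorem \ref{thm3}} with $t=m_i$ to inflate an irreducible polynomial of period $m_i^k$ into an irreducible factor of $f(x^{m_i})$ of period $m_i^{k+1}$. Your treatment is in fact slightly more complete, since you also justify the base case (existence of an irreducible polynomial of period exactly $m_i$) via the cyclotomic factorization of $x^{m_i}-1$, whereas the paper simply assumes such an $h_1(x)$ is available from the period table.
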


\begin{proof}
	From \textbf{Lemma \ref{lemma_odd}}, we know that the period of irreducible polynomial $p(f_i)$ over ${\Bbb Z}_2$ must be an odd number.
	
	So according to \textbf{Theorem \ref{thm3}}, any $m_i$ in $U$ meet the requirement $m_i \nmid 2$($t \nmid 2$).
	Then we suppose polynomials $h_1(x)$ has periods $m_i$.
	We can get an (maybe more than one) irreducible polynomial as a factor of $h_1(x^{m_i})$
	whose period equals to $m_i*m_i = m_i^2$ by using \textbf{Theorem \ref{thm3}}. We denote it as $h_2(x)$.
	
	Similarly we can get an (maybe more than one) irreducible polynomial
	as a factor of $h_2(x^{m_i})$ whose period equals to $m_i^2*m_i=m_i^3$ by using \textbf{Theorem \ref{thm3}}.
	We denote it as $h_3(x)$.
	
	By using this methods recursively, we can get polynomials whose periods equal to any powers of $m_i$.
\end{proof}

\begin{theorem}
	\label{thm_2_power}
	For any integer $t\ge 0$, we can get a polynomial whose period equals to $2^t$.
\end{theorem}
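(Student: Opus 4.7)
The plan is to construct such a polynomial explicitly by exploiting the fact that $x+1$ is the unique irreducible polynomial of period $1$ over $\mathbb{Z}_{2}$, and then using Theorem \ref{thm_fac} to promote this period to an arbitrary power of $2$ via multiplicity. The key observation is that the factor $\min\{2^{s}:2^{s}\ge e_{1},\ldots,e_{r}\}$ in the period formula is where every power of $2$ in $p(f)$ comes from, while the $\mathrm{lcm}$ of the $p(f_{i})$ is always odd by Lemma \ref{lemma_odd}. So if we make the irreducible factors contribute only an odd factor of $1$ and rely on a single multiplicity to supply the $2$-part, the formula will produce a pure power of two.

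First I would handle the base case $t=0$: the polynomial $f(x)=x+1$ is irreducible with $f(0)\ne 0$ and has $p(f)=1=2^{0}$, as recorded in the first row of Table \ref{tab3}.

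For $t\ge 1$, I would take $f(x)=(x+1)^{2^{t}}$, which satisfies $f(0)=1\ne 0$ and has the single irreducible factor $f_{1}(x)=x+1$ with multiplicity $e_{1}=2^{t}$. Applying Theorem \ref{thm_fac} then yields
\[
p(f)=\mathrm{lcm}[\,p(f_{1})\,]\cdot\min\{2^{s}:2^{s}\ge 2^{t}\}=1\cdot 2^{t}=2^{t},
\]
which is exactly the claim.

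There is essentially no obstacle in this argument; the statement is purely constructive once Theorem \ref{thm_fac} is in hand. The only point to verify is that for a single multiplicity $e_{1}=2^{t}$, the smallest power of two dominating $e_{1}$ is $2^{t}$ itself, which is immediate. Note also that one actually obtains a whole family of witnesses: for any $e$ with $2^{t-1}<e\le 2^{t}$, the polynomial $(x+1)^{e}$ has period $2^{t}$, so the construction is far from unique and fits naturally into the counting programme of this section.
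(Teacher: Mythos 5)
Your proof is correct and takes essentially the same route as the paper, which also considers $(x+1)^{s}$ for $2^{t-1}+1\le s\le 2^{t}$ and invokes Theorem \ref{thm_fac} together with the fact that $x+1$ has period $1$. Your explicit treatment of the base case $t=0$ is a minor (and welcome) extra care, but the argument is otherwise identical.
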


\begin{proof}
	We consider the polynomial $g(x)=(x+1)^s$, where $2^{t-1}+1\le s\le 2^t$. 
	Because the period of the irreducible polynomial $x+1$ equals to $1$, it is easy to see that $g(x)$
	has period equals to $2^t$ according to \textbf{Theorem \ref{thm_fac}}.
\end{proof}


Now we could find a group of polynomials which meet the requirement
$U = lcm[p(f_{1}), p(f_{2}), \ldots, p(f_{r})]$ from a period table which maps period to irreducible polynomials over ${\Bbb Z}_2$.(We draw part of the period table as \textbf{Table \ref{tab3}}. As for the full table, please refer to \cite{marsh1957table}).
Furthermore, We need to give a group of powers $\{e_1, e_2,\ldots, e_r\}$ to control the part $V$ to meet the requirement $2^{t-1} < \max\{e_1, e_2,\ldots, e_r\} \leq 2^t$ too.

\subsection{The lower bound for the number of LCA rules with a given period}
Suppose the given period $T$ has the prime factorization

\begin{equation}
T= m_1^{e_1}m_2^{e_2} \cdots m_r^{e_r} * 2^t = UV,
\label{thm_T_fac}
\end{equation}
where $m_1,m_2\cdots,m_r$ are odd primes.
We have supposed there are $g(m_i)$ irreducible polynomials whose periods equal to $m_i$.

\begin{theorem}
	There are at least $(2^{tr}-2^{(t-1)r})\prod_{i=1}^r{g(m_i)}$ different kinds of
	polynomials for given period $T$ without irreducible polynomial x+1 as a factor of $f(x)$. 
	\label{thm_type_exclude}
\end{theorem}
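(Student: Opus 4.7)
The plan is to exhibit an explicit family of polynomials with period $T$ and count its members. By \textbf{Theorem \ref{thm_fac}}, a polynomial of the form $f(x)=\prod_{i=1}^r f_i(x)^{a_i}$ with distinct irreducible $f_i\in\mathbb{Z}_2[x]$ of periods $p(f_i)=m_i^{e_i}$ has period $\mathrm{lcm}[m_1^{e_1},\ldots,m_r^{e_r}]\cdot\min\{2^u:2^u\ge\max_i a_i\}=U\cdot V$. So to force $p(f)=T=UV$ with $V=2^t$, I will pick exactly one irreducible $f_i$ per prime-power block $m_i^{e_i}$ (which recovers $U$ in the lcm) and constrain the multiplicities to lie in $\{1,\ldots,2^t\}^r$ with $\max_i a_i>2^{t-1}$ (which pins $V$ to $2^t$).

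For the irreducible factors, fix $i$ and start from each of the $g(m_i)$ seeds $h_1^{(1)},\ldots,h_1^{(g(m_i))}$ of period $m_i$. Iterating \textbf{Theorem \ref{thm_power}}: since $m_i\mid p(h_k^{(j)})$ at every stage, \textbf{Theorem \ref{thm3}} forces every irreducible factor of $h_k^{(j)}(x^{m_i})$ to have period $m_i\cdot p(h_k^{(j)})$, and after $e_i-1$ iterations one obtains an irreducible $f_i^{(j)}$ of period $m_i^{e_i}$ that divides $h_1^{(j)}(x^{m_i^{e_i-1}})$. These $g(m_i)$ outputs are pairwise distinct: a shared irreducible factor would divide $\gcd\bigl(h_1^{(j)}(x^{m_i^{e_i-1}}),\,h_1^{(j')}(x^{m_i^{e_i-1}})\bigr)$, but B\'ezout on the coprime irreducibles $h_1^{(j)},h_1^{(j')}$ gives $a(x)h_1^{(j)}(x)+b(x)h_1^{(j')}(x)=1$, and substituting $x\mapsto x^{m_i^{e_i-1}}$ shows this gcd equals $1$. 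So there are at least $g(m_i)$ candidates for $f_i$, and hence at least $\prod_{i=1}^r g(m_i)$ admissible tuples $(f_1,\ldots,f_r)$.

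For the multiplicities, the box $\{1,\ldots,2^t\}^r$ has $2^{tr}$ elements, and exactly $2^{(t-1)r}$ of them lie in the sub-box $\{1,\ldots,2^{t-1}\}^r$, on which $\min\{2^u:2^u\ge\max_i a_i\}\le 2^{t-1}<2^t$. The complement, of cardinality $2^{tr}-2^{(t-1)r}$, is precisely the set of exponent tuples that force $V=2^t$. Each combined choice $\bigl((f_1,\ldots,f_r),(a_1,\ldots,a_r)\bigr)$ produces via \textbf{Theorem \ref{thm_fac}} a polynomial of period $U\cdot 2^t=T$; unique factorization in $\mathbb{Z}_2[x]$ makes the parameterization injective; and since every $p(f_i)=m_i^{e_i}>1$ is odd, no $f_i$ equals $x+1$, so $x+1\nmid f$. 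Multiplying the two independent counts gives the claimed lower bound $(2^{tr}-2^{(t-1)r})\prod_{i=1}^r g(m_i)$.

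The main obstacle is the distinctness claim for the iterated lifts, which underpins the $\prod g(m_i)$ factor; the B\'ezout-substitution argument above should settle it cleanly, and everything else is elementary combinatorics on the exponent box combined with direct appeals to \textbf{Theorems \ref{thm_fac}}, \textbf{\ref{thm3}}, and \textbf{\ref{thm_power}}.
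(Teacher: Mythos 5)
Your proposal is correct and follows essentially the same route as the paper: fix one irreducible factor of period $m_i^{e_i}$ for each odd prime power in $T$, count the exponent tuples in $\{1,\ldots,2^t\}^r$ whose maximum exceeds $2^{t-1}$ as $2^{tr}-2^{(t-1)r}$, and multiply by $\prod_i g(m_i)$ via Theorem \ref{thm_fac}. You go further than the paper in one useful respect: the paper silently assumes that the $g(m_i)$ seeds of period $m_i$ lift to $g(m_i)$ \emph{distinct} irreducibles of period $m_i^{e_i}$ when $e_i>1$, and your B\'ezout-substitution argument (coprimality of $h^{(j)}(x^{N})$ and $h^{(j')}(x^{N})$) actually justifies that count, as does your explicit appeal to unique factorization for injectivity.
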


\begin{proof}
	From \textbf{Theorem \ref{thm_fac}}, we can construct the polynomial of the following form:
	\[ f(x)=f_1^{e_1}(x)\cdots f_r^{e_r}(x),  \]
	where the periods of $f_1(x), \ldots, f_r(x)$ are $m_1, \ldots, m_r$ respectively.
	Notice that $e_{1}, e_{2},..., e_{r}$ must meet requirements:
	$2^{t-1} < \max\{e_{1}, e_{2},..., e_{r}\} \leq 2^{t} $.
	So we have $2^{tr} - 2^{(t-1)r}$ kinds of $e_1, e_2,\ldots, e_r$.
	Then for each prime factor $m_i$, if we have $g(m_i)$ kinds of corresponding irreducible polynomials,
	there must be at least $(2^{tr}-2^{(t-1)r})\prod_{i=1}^r{g(m_i)}$ 
	different kinds of combinations to get the polynomials for the given period $T$ 
	without irreducible polynomial x+1 as a factor of $f(x)$. 
\end{proof}

Due to the period of $f_0(x)=x+1$ is $1$, it is optional as a factor of $f(x)$ when we generate the polynomial with given period.
If we take $x+1$ into consideration, and set $f_0=x+1$, we will get the next theorem.

\begin{theorem}
	There are at least $(2^t-1)(2^{tr} - 2^{(t-1)r})\prod_{i=1}^{r}{g(m_i)}+2^{tr}\prod_{i=1}^{r}{g(m_i)}$ extra different kinds of
	polynomials for period $T$ with $x+1$ as a factor of $f(x)$.
\end{theorem}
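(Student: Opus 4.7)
The plan is to parametrize every polynomial of the required form as
$f(x) = (x+1)^{e_0}\prod_{i=1}^{r} f_i(x)^{e_i}$
with $e_0 \geq 1$ (so that $x+1$ genuinely appears as a factor), $e_i \geq 1$, and each $f_i$ ranging over the $g(m_i)$ irreducible polynomials of period $m_i$ tabulated as in Table~\ref{tab3}. Applying Theorem~\ref{thm_fac} together with $p(x+1)=1$ and $p(f_i)=m_i$ gives
$p(f) = \operatorname{lcm}[1,m_1,\ldots,m_r] \cdot \min\{2^u : 2^u \geq \max(e_0,e_1,\ldots,e_r)\} = U\cdot \min\{2^u : 2^u \geq \max(e_0,\ldots,e_r)\}$,
so $p(f)=T=UV$ with $V=2^t$ holds if and only if every $e_i\in\{1,\ldots,2^t\}$ and at least one exponent strictly exceeds $2^{t-1}$. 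The task reduces to counting $(e_0,e_1,\ldots,e_r)$-tuples satisfying these two constraints and multiplying by $\prod_{i=1}^{r} g(m_i)$.

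To reach the announced expression I would split on whether $e_0=2^t$ or $e_0\leq 2^t-1$, which matches the two summands of the bound. When $e_0=2^t$, the maximum is already $2^t$ and $(e_1,\ldots,e_r)$ ranges freely over $\{1,\ldots,2^t\}^r$, giving $2^{tr}$ tuples. When $e_0\in\{1,\ldots,2^t-1\}$, $e_0$ alone cannot push the maximum above $2^{t-1}$, so at least one of $e_1,\ldots,e_r$ must exceed $2^{t-1}$; the same total-minus-complement argument used in the previous theorem gives $2^{tr}-2^{(t-1)r}$ admissible tuples for each of the $2^t-1$ choices of $e_0$, contributing $(2^t-1)(2^{tr}-2^{(t-1)r})$. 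Summing the two disjoint cases and multiplying by $\prod_{i=1}^{r} g(m_i)$ reproduces the claimed quantity exactly.

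What remains is a distinctness argument: different sextuples $(e_0;e_1,\ldots,e_r;f_1,\ldots,f_r)$ must yield distinct polynomials $f(x)$. This follows from unique factorization in $\mathbb{Z}_2[x]$ together with the fact that $x+1$ and the $f_i$'s are pairwise distinct irreducibles (each $f_i$ being drawn from a list of polynomials of a different period $m_i$). I anticipate the main obstacle to be bookkeeping rather than depth: the split deliberately is not exhaustive — it omits, for instance, tuples with $2^{t-1}<e_0<2^t$ and every $e_i\leq 2^{t-1}$ for $i\geq 1$ — which is precisely why the statement says ``at least'' and not an equality, and I must be careful to pick the disjoint split that makes the ``$2^t-1$'' and ``$2^{(t-1)r}$'' factors arise from the correct subpopulations. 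Once that decomposition is pinned down, the remainder is a routine product/subtraction count.
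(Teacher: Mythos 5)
Your proposal is correct and follows essentially the same decomposition as the paper: split on $e_0=2^t$ (giving $2^{tr}$ free exponent tuples) versus $e_0\in\{1,\ldots,2^t-1\}$ (giving $2^{tr}-2^{(t-1)r}$ admissible tuples each), multiply by $\prod_{i=1}^r g(m_i)$, and treat the acknowledged undercount as the reason the statement says ``at least''. The only nit is your passing claim that an $e_0\le 2^t-1$ ``cannot push the maximum above $2^{t-1}$'' (false when $2^{t-1}<e_0<2^t$), but you correct this yourself in the final paragraph and it does not affect the validity of the lower bound.
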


\begin{proof}
	Consider the polynomial of the following form: 
	\[ f(x)=f_0^{e_0}(x) f_1^{e_1}(x)\cdots f_r^{e_r}(x).  \]
	If we demand that $f(x)$ has the factor $f_0(x)=x+1$, then $e_0=1,2,\ldots,2^t$.
	And when $e_0=2^t$, the requirement that $\max\{e_{1}, e_{2},..., e_{r}\} > 2^{t-1}$ is no need,
	so we have at least 
	\[ (2^t-1)(2^{tr} - 2^{(t-1)r})\prod_{i=1}^{r}{g(m_i)}+2^{tr}\prod_{i=1}^{r}{g(m_i)} \]
	extra polynomials for period $T$ with $x+1$ as a factor of $f(x)$.
\end{proof}


Finally, We translate all $f(x)$ back to LCA rules whose periods equal to the given one.

\begin{example}
	For example, if we have a period $T=84=2^2*3*7$, then according to Eq.(\ref{thm_T_fac}), $V=2^2$, $U=3*7$,
	and we know $g(3)\ge 1$, $g(7) \ge 1$ by looking up \textbf{Table \ref{tab3}},
	
	For period $3$, $f_1(x)=x^2+x+1$. For period $7$, $f_2(x)=x^3+x+1$.
	Therefore, we can get at least $(4 * 4 - 2 * 2) * 1 * 1 $ polynomials without $x+1$ as a factor.
	
	If we take $f_0(x) = x + 1$ into consideration, we can get $3 * (4 * 4 - 2 * 2)* 1 * 1 + 4 * 4* 1 * 1 $ more polynomials.
	And we have at least $(1+3) (4 * 4 - 2 * 2)* 1 * 1 + 4 * 4* 1 * 1 =64$ polynomials in total.
	
	Finally we translate these polynomials back to LCA rules.
\end{example}

\begin{example}
	Another example, if we have a period $T=360=2^3*3^2*5$, then according to \textbf{Lemma \ref{lemma_odd}},
	$V=2^{3}$, $U=3^{2}*5$, and we know $g(3)\ge 1$, $g(5)\ge 1$ by looking up \textbf{Table \ref{tab3}}.
	
	For period $3^{2}$, we take two steps:
	\[ pre\_f_{1}(x) = x^{2} + x + 1,
	mid\_f_{1}(x) = pre\_f_{1}(x^{3}) = (x^{3})^{2} + (x^{3}) + 1 = x^{6} + x^{3} + 1. \]
	If $mid\_f_{1}(x)$ is irreducible, then it equals to the final $f_1(x)$,
	else, from \textbf{Theorem \ref{thm3}}, we know that it has an irreducible factor whose period is $3^2$, 
	and we take this factor as $f_1(x)$.
	
	For period $5$, $f_2(x)=x^4+x^3+x^2+x+1$.
	
	Therefore, we can get at least $ (8 * 8 - 4 * 4) * 1 * 1 $ polynomials without $x+1$ as a factor.
	
	If we take $f_0(x)=x+1$ into consideration, we can get $7 * (8 * 8 - 4 * 4) * 1 * 1 + 8 * 8 * 1 * 1$ more polynomials.
	Then we have at least $(1+7) * (8 * 8 - 4 * 4) * 1 * 1 + 8 * 8 * 1 * 1=448$ polynomials in total.
	
	Finally we translate these polynomials back to LCA rules.
\end{example}

\subsection{Algorithm to generate the polynomial with given period $T$}
\begin{description}
	\item[Step1.] For a given period $T$, factorize it to prime factors
	\[T=2^t\times m_1^{e_1}m_2^{e_2}\cdots m_r^{e_r},\]
	where each $m_i(1\le i\le r)$ is an odd prime.
	
	\item[Step2.] Set $V=2^t$ and $U=m_1^{e_1}m_2^{e_2}\cdots m_r^{e_r}$.
	
	\item[Step3.]
	For each $e_i=1$, look up the corresponding polynomial $f_i(x)$ with period $m_i$ from the polynomial period table over ${\Bbb Z}_2$.
	
	For each $e_j>1$, look up the corresponding polynomial $f_{j_1}(x)$ with period $m_j$ from the polynomial period table,
	and find at least one irreducible polynomial with period $m_i^2$ as a factor of $f_{j_1}(x^{m_j})$, denote it as $f_{j_2}(x)$. 
	Similarly we can find least one irreducible polynomial with period $m_i^3$ as a factor of $f_{j_2}(x^{m_j})$, 
	denote it as $f_{j_3}(x)$. From \textbf{Theorem \ref{thm3}}, continually doing this, 
	we could always get an irreducible polynomial whose period equals to $m_i^{e_i}$ at last.
	
	\item[Step4.]
	Get $f(x)$ by combining these $f_{i}(x)$ using \textbf{Theorem\ref{thm_fac}}.
	
	\item[Step5.]
	Translate these $f(x)$ back to LCA rules. The lower bound of the numbers of LCA rules, 
	whose period is $T$, 
	is $2^t(2^{tr} - 2^{(t-1)r})\prod_{i=1}^{r}{g(m_i)}+2^{tr}\prod_{i=1}^{r}{g(m_i)}$.
\end{description}

\section{Conclusions\label{Conclusion}}
The paper has proposed polynomial substitution and standard basis postfix algorithms to efficiently calculate reversibility of LCA, which reduce former algorithms' exponential complexity to polynomial complexity. Furthermore, a novel perspective is proposed to generate rules from period conversely.

The polynomial algorithm replace the former DFA to calculate the period, which reduce time complexity from $O(2^{r_{L}+2{r_R}-1})$ to $O((r_L + r_R)^k)$, where k is a constant. With this period, we have a range for further reversibility verification.

The standard-basis-postfix algorithm verify reversibility for a specific node, 
which decreases the time complexity from O($2^{r_{R}}$) to O(${r_{R}}^{3}$), space complexity from O($(r_{R}+r_{L}) * 2^{r_{R}}$) to O($(r_{R}+r_{L}) * r_{R}$).

Moreover, when we are given a positive integer as the period of the reversibility of a specific LCA, 
based on the polynomial formula mentioned before, we can conversely give LCA rules corresponding with it.

To summarize, the main objective of this paper is to propose a more efficient way to cope with reversibility of LCA with large size, 
which is incalculable before. With this improvement, reversible LCA will have better and broader applications in data encryption, 
decryption and error-correcting codes.

Additional further work may include the analysis of properties of all kinds of LCA with various boundary conditions and extend the result of this article to multidimensional cases.






\end{document}